\theoremstyle{plain}
\newtheorem{theorem}{Theorem}[section]
\newtheorem{claim}[theorem]{Claim}
\newtheorem{lemma}[theorem]{Lemma}
\theoremstyle{definition}
\newtheorem{definition}[theorem]{Definition}
\theoremstyle{remark}
\newtheorem{example}[theorem]{Example}
\newcommand{\bbRp}{{\mathbb{R}_{\ge 0}}}
\newcommand{\rb}[1]{\left( #1 \right)} 
\newcommand{\scalar}[2]{\langle {#1} ,{#2}\rangle}
\DeclareMathOperator*{\argmin}{argmin}
\newcommand{\e}{\varepsilon}
\newcommand{\cA}{\mathcal A}
\newcommand{\OPT}{\operatorname{OPT}}
\newcommand{\swapping}{\textsc{Swapping}\xspace}
\newcommand{\sieve}{\textsc{Sieve-Streaming}\xspace}
\newcommand{\sieveplus}{\textsc{Sieve-Streaming++}\xspace}
\newcommand{\window}{\textsc{Encompassing-Set}\xspace}
\newcommand{\eswapping}{\textsc{Chasing-Local-Opt}\xspace}
\newcommand{\swap}{\textsc{Min-Swap}\xspace}
\newcommand{\marginal}[2]{f\rb{#1 \mid #2}}
\title{Consistent Submodular Maximization}
\author{
	Paul D{\"u}tting\thanks{Google Research.}
	\and
	Federico Fusco\thanks{Department of Computer, Control, and Management 
Engineering, Sapienza University of Rome, 
Italy.}
\and
Silvio Lattanzi{$^*$}
\and
Ashkan Norouzi-Fard{$^*$}
\and
Morteza Zadimoghaddam{$^*$}}
\date{}
\begin{document}

\maketitle

\begin{abstract}
Maximizing monotone submodular functions under cardinality constraints is a classic optimization task with several applications in data mining and machine learning. In this paper we study this problem in a dynamic environment with consistency constraints: elements arrive in a streaming fashion and the goal is maintaining a constant approximation to the optimal solution while having a stable solution (i.e., the number of changes between two consecutive solutions is bounded).
We provide algorithms in this setting with different trade-offs between consistency and approximation quality. We also complement our theoretical results with an experimental analysis showing the effectiveness of our algorithms in real-world instances.
\end{abstract}

\section{Introduction}
\label{sec:intro}

    Submodular optimization is a powerful framework for modeling and solving problems that exhibit the widespread diminishing returns property. Thanks to its effectiveness, it has been applied across diverse domains, including video analysis~\citep{ZhengJCP14}, data summarization~\citep{LinB11,BairiIRB15}, sparse reconstruction~\citep{Bach10,DasK11}, and active learning~\citep{GolovinK11,AmanatidisFLLR22}.
    
    In this paper, we focus on submodular maximization under cardinality constraints: given a submodular function $f$, a universe of elements $V$, and a cardinality constraint $k$, the goal is to find a set $S$ of at most $k$  elements that maximizes $f(S)$. 
    Submodular maximization under cardinality constraints is NP-hard, nevertheless efficient 
    approximation algorithms exist for this task in both the centralized and the streaming setting \citep{nemhauser1978analysis,BadanidiyuruMKK14,kazemi2019submodular}.
    
    {One aspect of efficient approximation algorithms for submodular maximization that has received little attention so far, is the stability of the solution.}
    In fact, for some of the known algorithms, even adding a single element to the universe of elements $V$ may completely change the final output (see \Cref{app:instability} for some examples). 
    Unfortunately, this is problematic in many real-world applications where consistency is a fundamental system requirement.
    Indeed, a flurry of recent work has started to explore various optimization problems under stability and consistency constraints such as clustering \citep{lattanzi2017consistent,cohen2022online,fichtenberger2021consistent,guo2021consistent,lkacki2024fully}, facility location \citep{cohen2019fully,bhattacharya2022efficient}, and online learning \citep{jaghargh2019consistent}. 
    
    Having solutions that evolve smoothly is central in many practical application of submodular optimization. 
    Consider, for example, the data summarization task in an evolving setting where elements are added to the universe $V$. In this setting, having a stable summary that changes as little as possible from step to step is very important both for serving the summary to a user or for using it in a machine learning model. In fact, in both settings a drastic change of the solution may have negative impact on system usability, it could harm user attention, and adversely effect the performance of the machine learning model.
    
    For these reasons, in this paper we initiate the study of submodular maximization under consistency constraints, where we allow the solutions to change only slightly after each element insertion. More formally, consider a stream $V$ of exactly $n$ elements, chosen by an 
    adversary.
    Denote by $V_t = \{e_1, \ldots, e_t\} \subseteq V$ the set of all elements inserted up to the $t$-th stream operation, and let $\OPT_t$ be an optimum feasible solution for $V_t$.
    Our goal is to design an algorithm with two key properties. 
    On the one hand, we want the algorithm to maintain, at the end of each operation $t$, 
    a 
    solution $S_t \subseteq V_t$, with $|S_t| \le k$, of high value $f(S_t)$. In particular, we say that an algorithm is an $\alpha$-approximation of the best solution if $ \alpha f(S_t) \ge f(\OPT_t)$, for all $t = 1,\dots, n$.
    On the other hand, we want the dynamic solution to not change much after consecutive insertions: we say that an algorithm is $C$-consistent if $|S_t \setminus S_{t-1}| \le C$ for all $t = 2, \dots, n$. In general, we say that an algorithm is consistent, without specifying $C$, when $C$ is constant.
    
    It is interesting to note that the \swapping algorithm by \citet{ChakrabartiK15} already conjugates constant approximation with constant consistency\footnote{Following e.g., \citet{DuettingFLNZ22,DuettingFLNZ23}, we call \swapping the instantiation of the general framework by \citet{ChakrabartiK15} for the special case of matroid constraints. We refer to \Cref{swap:tight} for the pseudocode.}. \swapping maintains a dynamic feasible solution and each new arriving element is added to the solution if either it fits into the cardinality constraint or it is possible to swap it with some low-value element. 
    It is well known that \swapping achieves a $4$-approximation, and from the previous description it is also clear that it is $1$-consistent.%
    \footnote{It is possible to show that the $4$ is tight for the approximation factor. For an example please refer to Appendix~\ref{swap:tight}.}
    
    {Putting consistency aside, it is NP-hard to get an approximation guarantee better than $\nicefrac{e}{(e-1)}$ \citep{Feige98}, which can be achieved by recomputing a greedy solution \citep{nemhauser1978analysis} from scratch after every insertion. However, such approach is not consistent (see \Cref{app:instability})}.

    A line of work that is related to our model is that of fully-dynamic submodular maximization \citep[e.g.,][]{LattanziMNTZ20,Monemizadeh20,DuettingFLNZ23,BanihashemBGHJM24}. There, the algorithm is given an arbitrary stream of insertions and deletions, and the goal is to maintain a good dynamic solution with low amortized running time. While the constraint on running time naturally induces algorithms characterized by solutions that do not change often, known algorithms for fully dynamic submodular maximization are not consistent, as they all contemplate the possibility of recomputing the solution from scratch from time to time.

    \paragraph{Our Contribution.} Given these considerations, it is natural to ask if it is possible to obtain a better trade-off between quality and consistency. We answer this question positively:
    \begin{itemize}
        \item We first provide 
        a $(3.147 +O(\nicefrac 1k))$-approximation
        algorithm that is $1$-consistent,
        improving on the guarantees of the \swapping algorithm.  
        \item We then provide a $(2.619 + \e)$-approximation\footnote{As is common in the submodular maximization literature, the parameter $\e$ is intended to be a small constant that the algorithm designer can tune according to the application at hand: it is possible to attain an approximation arbitrarily close to $2.619$, at the cost of a worse consistency.}
        algorithm that is $\tilde{O}(\nicefrac 1\e)$-consistent, where the $\tilde{O}$ notation hides poly-logarithmic factors in $\nicefrac 1\e$.
    \end{itemize}
    We complement our positive
    results with a lower bound showing that  
    for any constant $C$, no deterministic algorithm can be $C$-consistent and return a better than $2$ approximation. 
    Since both our algorithms are deterministic, the lower bound shows that our algorithms obtain a near-optimal quality-consistency tradeoff. We leave the resolution of the remaining gaps, and the study of randomized algorithms as exciting directions for future work.
    
    We also present extensive experiments with real-world data sets and a synthetic data set 
    (\Cref{sec:experiments,swap:tight,app:other_experiments}). 
    The experiments show that our algorithms achieve comparable value as \swapping and the non-consistent \sieve \citep{BadanidiyuruMKK14} on real-world data sets; while achieving significant savings in the total number of changes. Furthermore, the synthetic data set -- constructed using a hard instance for \swapping presented in \Cref{swap:tight} -- confirms the improvements in the worst-case approximation guarantees relative to \swapping from our theoretical analysis, showing that there too the gains can be significant (in the order of the $21.325\%$ and $34.525\%$ improvements that we show in our analysis).

    \paragraph{Our Techniques.}
    
    Our first algorithm, \window, maintains a benchmark set $B_t$ that is used to decide whether to add or discard any new element. More precisely, any  arriving element $e_t$ is added to $B_{t-1}$ if, upon arrival,the marginal contribution of $e_t$ to $B_{t-1}$, that is $f(e_t \mid B_{t-1})$, is at least $\nicefrac{\beta}{k} \cdot f(B_{t-1})$. Here $\beta$ is a judiciously chosen constant that is larger than $1$, namely $\beta = 1.14$. 
    At any given time $t$, the solution $S_t$ maintained by the algorithm consists of the last $k$ elements added to $B_t$. 
    
    This algorithm is $1$-consistent by construction, while the approximation guarantee 
    descends from the following two properties of this algorithm. First, the (potentially infeasible) benchmark set $B_t$ achieves a $(1+\beta)$-approximation to $f(\OPT_t)$ (where $1+\beta = 2.24$ by the choice of $\beta$).
    Second, due to the exponential nature of the condition by which elements are added to the benchmark set, the elements in $B_t$ that are not part of $S_t$ only account for a small fraction of the value of $B_t$; namely, $f(B_t) \geq (1+\nicefrac{\beta}{k})^k f(B_t\setminus S_t)$.
    Intuitively, the second property shows that $S_t$ captures a significant fraction of $f(B_t)$, while the first property shows that $f(B_t)$ is a good approximation to $f(\OPT_t)$. A careful analysis shows that the two properties lead to the claimed factor of $3.147 + O(\nicefrac 1k)$.
    
    Our second algorithm, \eswapping, provides a better approximation guarantee at the cost of possibly performing more 
    than one swap per step (but still at most constantly many). Rather than maintaining a benchmark set, this algorithm only maintains a solution $S_t$, and updates it via local improvements. 
    It applies a similar swapping condition as \window, by requiring that the marginal value of an arriving element $e_t$ to $S_{t-1}$ should be at least $\nicefrac{\phi}{k} \cdot f(S_{t-1})$, where $\phi \approx 1.61$ is the golden ratio. Other than \window, however, rather than swapping out the oldest element that was added, it swaps out an element $r$ whose marginal contribution ${f(r \mid S - r)}$\footnote{We use $S - r$ instead of $S\setminus \{r\}$ and $S+x$ instead of $ S \cup \{x\}$.} to the current solution $S$ is less than a $\nicefrac{1}{k}$-fraction of the current solution's value $f(S)$. 
    
    Moreover, after the arrival of each element $e_t$ and its possible addition to $S_t$, it 
    performs up to $N = \tilde{O}(\nicefrac{1}{\varepsilon})$ additional swaps. For this it considers all elements that have arrived so far, which we denote by $V_t$, 
    and it tries to add them one-by-one by the same condition and procedure that is used for newly arriving elements. 
    The purpose of these extra swaps is to drive the maintained solution $S_t$ closer to a local optimum: a solution $S \subseteq V_t$ such that there is no element $x \in V_t$ such that $f(x \mid S) \geq \nicefrac{\phi}{k} \cdot f(S)$.
    The improved approximation guarantee then stems from the fact that either the algorithm was at a local optimum in the not too distant past, or it performed many swaps since.

\section{Preliminaries}
\label{sec:preliminaries}

    We consider a set function $f: 2^V \to \bbRp$ on a ground set $V$ of cardinality $n$. Given two sets $X, Y \subseteq V$, the \emph{marginal gain} of $X$ with respect to $Y$, $\marginal{X}{Y}$, quantifies the change in value of adding $X$ to $Y$ and is defined as 
    \[
        \marginal{X}{Y} = f(X \cup Y) -  f(Y).
    \]
    When $X$ consists of a singleton $x$, we use the shorthand $f(x\mid Y)$ instead of $f(\{x\}\mid Y)$. Function $f$ is called \emph{monotone} if $\marginal{e}{X}  \geq 0$ for each set $X \subseteq V$ and element $e \in V$, and \emph{submodular} if for any two sets $X \subseteq Y \subseteq V$ and any element $e \in V \setminus Y$ we have $\marginal{e}{X} \ge \marginal{e}{Y}.$
    
    Throughout the paper, we assume that $f$ is monotone and that it is \emph{normalized}, i.e., $f(\emptyset) = 0$. We model access to the submodular function $f$ via a value oracle that computes $f(S)$ for given $S \subseteq V$. The problem of maximizing a function $f$ under a \emph{cardinality constraint} $k$ is defined as selecting a set $S \subseteq V$ with $|S| \le k$ that maximizes $f(S)$.

\section{Impossibility Result}
\label{sec:lower_bound}
    
    Putting computational efficiency aside, it may be possible to design a consistent algorithm which maintains the optimal solution, or an arbitrarily good approximation. We prove that this is not the case: no deterministic algorithm with constant consistency enjoys an approximation guarantee better than $2$. We remark that this is an information-theoretical bound, and concerns the streaming nature of the problem. 
        
    \begin{theorem}
    \label{thm:2_deterministic}
        Fix any constant $C$ and precision parameter $\e \in (0,1)$. No $C$-consistent (deterministic) algorithm provides a $(2-\e)$-approximation.  
    \end{theorem}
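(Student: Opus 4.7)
The plan is to exhibit an adaptive adversary on a coverage function, where a single late-inserted element is tailored to exactly subsume the items covered by the algorithm's current solution, and pair this with $2k$ disjoint unit elements so that a near-doubled optimum remains out of reach within any $C$-consistent update. I would pick $k$ sufficiently large in terms of $C$ and $\e$ (concretely $k = \Theta(C/\e)$ suffices).

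For the construction, take items $U = \{u_1, \ldots, u_{2k}\}$ and let the ground set contain unit elements $b_1, \ldots, b_{2k}$ with $b_i$ covering $\{u_i\}$, together with one subsumer $c_J$ covering $\{u_i : i \in J\}$ for every $J \subseteq [2k]$. The coverage function $f$ on this (large but fixed) ground set is monotone submodular, so only the insertion order is adaptive. Phase one inserts $b_1, \ldots, b_{2k}$ in order. Observing the algorithm's solution $S_{2k}$, I set $J_{\mathrm{ALG}} = \{i : b_i \in S_{2k}\}$ and $v = |J_{\mathrm{ALG}}| = f(S_{2k})$. Phase two inserts $c_{J_{\mathrm{ALG}}}$.

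The heart of the analysis is to compute, for each possible $v \le k$, both the ratio at step $2k$ and the ratio at step $2k + 1$. The former is simply $\OPT_{2k}/f(S_{2k}) = k/v$. For the latter, $\OPT_{2k+1} = v + k - 1$, witnessed by $\{c_{J_{\mathrm{ALG}}}\}$ paired with any $k - 1$ untouched $b_i$'s ($i \notin J_{\mathrm{ALG}}$). To bound $f(S_{2k+1})$ from above, I use $|S_{2k+1} \setminus S_{2k}| \le C$ and case-split on whether $c_{J_{\mathrm{ALG}}} \in S_{2k+1}$. If included, it subsumes every contribution of the old $b_i$'s in $S_{2k} \cap S_{2k+1}$, so only $C - 1$ fresh $b_i$'s can bring in genuinely new items, yielding $f(S_{2k+1}) \le v + C - 1$. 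If excluded, $S_{2k+1}$ is a subset of $S_{2k}$ together with at most $C$ fresh $b_i$'s, giving $f(S_{2k+1}) \le \min(v + C, k)$. Combining: if $v \le k/(2 - \e)$ the ratio at step $2k$ is already $\ge 2 - \e$; otherwise, an elementary calculation shows the ratio at step $2k + 1$ is at least $(v + k - 1)/(v + C)$ in the remaining regime (with the $(v+k-1)/(v+C-1)$ variant when $v$ is close to $k$), which is $\ge 2 - \e$ once $k \ge (C + 1)/\e$.

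The main obstacle is the ``include $c_{J_{\mathrm{ALG}}}$'' option: at first glance it looks dangerous because the subsumer has individual value $v$ and joins the solution using only one of the $C$ allowed additions, potentially giving the algorithm a lot for cheap. However, precisely because it subsumes every old item in $S_{2k} \cap S_{2k+1}$, it contributes no marginal value on top of those items; the $C - 1$ remaining additions can bring in at most $C - 1$ new items, so the total value is still capped at $v + C - 1$, leaving the ratio near $2$. The only other technical care needed is making the adaptive choice of $c_{J_{\mathrm{ALG}}}$ compatible with the value-oracle model, which is why all subsumers $\{c_J\}_{J \subseteq [2k]}$ are placed in the ground set up front.
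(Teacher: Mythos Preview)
Your proof is correct and follows essentially the same route as the paper: stream $2k$ unit singletons over a coverage universe, observe the deterministic algorithm's solution, and then insert a single ``subsumer'' set tailored to it. The one difference is that the paper pads the subsumer out to exactly $k$ items (adding $k-v$ arbitrary uncovered items on top of the algorithm's $v$), which forces $\OPT_{2k+1}=2k-1$ and $f(S_{2k+1})\le k+C$ regardless of $v$ and thereby avoids your case split on whether $v\le k/(2-\e)$; your version with $|c_{J_{\mathrm{ALG}}}|=v$ works just as well but needs the two-step ratio analysis. One small slip: the specific threshold $k\ge (C+1)/\e$ does not quite handle the case where the algorithm includes $c_{J_{\mathrm{ALG}}}$ with $v=k$, since there the ratio $(2k-1)/(k+C-1)\ge 2-\e$ requires $k\gtrsim 2C/\e$; your headline claim $k=\Theta(C/\e)$ is correct, and the paper in fact takes $k\ge 3C/\e$.
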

        \begin{proof}
            Fix any constant $C$, precision parameter $\e > 0$, and a deterministic algorithm $\cA$ that is $C$-consistent, we construct a covering instance such that $\cA$ does not maintain a $(2-\e)$ approximation.
            Let $G = \{g_1, \dots, g_n\}$ be a ground set and $V$ be a family of subsets of $G$ such that $V$ contains all the subsets of $G$ of cardinality $1$ and $k$, with $k = \nicefrac n2$. The covering function $f$ is naturally defined on $V$, and we consider the task of maximizing $f$ with cardinality $k$. 
            
            Observe the behaviour of $\cA$ on the sequence $\{g_1\}, \dots \{g_n\}$. At the end of this partial sequence $\cA$ maintains a certain solution $S = \{\{g_{i_1}\}, \dots, \{g_{i_{\ell}}\}\}$, with $\ell \le k.$ 
            Now suppose the next element to arrive is $\{g_{i_1}, \dots, g_{i_{\ell}}, g_{i_{\ell+1}}, \dots, g_{i_k}\}$, where $g_{i_{\ell+1}},\dots, g_{i_k}$ are some arbitrary elements not covered by $S$. The value of the optimal solution after this insertion is $2k-1$ (just take the last subset and $k-1$ non overlapping singletons). The value of $S$ is $\ell \le k$ and, even if $\cA$ adds to $S$ the subset $\{g_{i_1}, \dots, g_{i_{\ell}}, g_{i_{\ell+1}}, \dots, g_{i_k}\}$ and $C-1$ other singletons, it cannot get a solution of value more than $k + C$. The theorem follows by choosing appropriate values for $k$: $k \ge \nicefrac {3C}{\e}$. 
        \end{proof}
        

\section{\window} 
\label{sec:window}\label{sec:1consistent} 
    In this section, we present the \window algorithm, which achieves an approximation guarantee of $3.146+O(\nicefrac{1}{k})$ and $1$-consistency (changes at most one element for each insertion). \window maintains a benchmark set $B_t$ to which it adds all the elements that, upon arrival, exhibit a marginal contribution to $B_t$ that is at least $\nicefrac \beta k \cdot f(B_t)$. At any given stream operation $t$, the solution $S_t$ is given 
    by the last $k$ elements added to $B_t$. We refer to the pseudocode for further details. We prepare the analysis of the properties of $\window$ with two Lemmata. We start relating the value of the optimal solution with that of the benchmark. 
    
    \begin{algorithm}[t]
    \caption{\window} \label{alg:window-swapping}
    \begin{algorithmic}[1]
    \STATE \textbf{Environment:} Stream $V$, function $f$, cardinality $k$
    \STATE Threshold parameter $\beta \gets 1.14$
    \STATE $B_0 \gets \emptyset,$ $S_0 \gets \emptyset$, and $t \gets 1$ 
    \FOR{$e_t$ new element arriving}
        \IF{$f(e_t\mid B_{t-1}) \ge \frac \beta k f(B_{t-1}) $}\label{line:test}
            \STATE $B_t \gets B_{t-1} + e_t$
            \STATE $S_t \gets S_{t-1} + e_t$
            \IF{$|S_t| = k + 1$}
                \STATE remove from $S_t$ the element $e_s$ with smallest $s$
            \ENDIF
            \STATE $t \gets t+1$
        \ENDIF
    \ENDFOR
    \end{algorithmic}
    \end{algorithm}
    
    \begin{lemma}
    \label{lem:OPTvsV}
        After each insertion $e_t$, the following holds:
        \[
            f(\OPT_t) \le (1+\beta) \cdot f(B_t).
        \]
    \end{lemma}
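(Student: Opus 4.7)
The plan is to bound $f(\OPT_t)$ by $f(B_t)$ plus the total marginal contribution of the optimum elements that were \emph{rejected} by $\window$'s selection test. By exploiting the fact that each rejected element failed the test at its arrival time, I can argue that its marginal contribution (relative to the current benchmark) is small, and then combine these bounds using submodularity.

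More concretely, let $O = \OPT_t$ and partition it into $O_1 = O \cap B_t$ and $O_2 = O \setminus B_t$. I would first apply monotonicity and submodularity to write
\[
f(O) \;\le\; f(O \cup B_t) \;=\; f(B_t) + f(O \setminus B_t \mid B_t) \;\le\; f(B_t) + \sum_{o \in O_2} f(o \mid B_t).
\]
Next, for each $o \in O_2$, let $\tau(o)$ be the time $o$ arrived in the stream. Since $o \notin B_t$, the test on line~\ref{line:test} must have failed at time $\tau(o)$, giving
\[
f(o \mid B_{\tau(o)-1}) \;<\; \tfrac{\beta}{k}\, f(B_{\tau(o)-1}).
\]
Because $B_{\tau(o)-1} \subseteq B_t$, submodularity gives $f(o \mid B_t) \le f(o \mid B_{\tau(o)-1})$, and monotonicity gives $f(B_{\tau(o)-1}) \le f(B_t)$, so $f(o \mid B_t) < \tfrac{\beta}{k} f(B_t)$.

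Finally, since $|O_2| \le |O| \le k$, summing yields
\[
f(O) \;\le\; f(B_t) + |O_2| \cdot \tfrac{\beta}{k}\, f(B_t) \;\le\; (1+\beta)\, f(B_t),
\]
which is the desired inequality. There is no real obstacle here: the only subtlety worth double-checking is the direction of the inequality $f(o \mid B_t) \le f(o \mid B_{\tau(o)-1})$, which follows from $B_{\tau(o)-1} \subseteq B_t$ and the submodularity of $f$, and the use of monotonicity to compare $f(B_{\tau(o)-1})$ with $f(B_t)$. The argument is essentially a clean application of the standard ``failed threshold test'' technique familiar from \sieve-style analyses, adapted to the case where the threshold grows with the current benchmark rather than being set relative to $\OPT$.
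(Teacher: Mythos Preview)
Your proof is correct and follows essentially the same approach as the paper's: both bound $f(\OPT_t)$ via $f(\OPT_t\cup B_t)$, then use that each $o\in\OPT_t\setminus B_t$ failed the insertion test at arrival, combined with submodularity ($B_{\tau(o)-1}\subseteq B_t$) and monotonicity, to conclude $f(o\mid B_t)\le\tfrac{\beta}{k}f(B_t)$ and sum over at most $k$ such elements. The only cosmetic difference is your explicit partition into $O_1,O_2$ and the notation $\tau(o)$ for the arrival time.
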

    \begin{proof}
        Consider any element that belongs to $\OPT_t$ but not to the benchmark set $B_t$ after the computation following the insertion of $e_t$, i.e., $e_s \in \OPT_t \setminus B_t$, with $s \le t$. Element $e_s$ has not been included to $B_s$ (because it does not belong to $B_t \supseteq B_s$) upon its insertion, so the following holds:
        \begin{align}
            f(e_s\mid B_{t}) &\le f(e_s\mid B_{s-1}) \tag{by submodularity}\\
            &\le \tfrac \beta k f(B_{s-1}) \tag{since $e_s\notin B_s$ }\\
            &\le \tfrac \beta k f(B_t). \tag{by monotonicity}
        \end{align}
        So, for any element $e_s\in \OPT_t \setminus B_t$, it holds that
        \begin{equation}
        \label{eq:e_s}
            f(e_s\mid B_{t}) \le \tfrac \beta k f(B_t).    
        \end{equation}
        The above inequality is the crucial ingredient of the proof:
        \begin{align*}
            f(\OPT_t) &\le f(\OPT_t \cup B_t) \tag{by monotonicity}\\
            &\le f(B_t) + \sum_{e_s \in \OPT_t\setminus B_t} f(e_s\mid B_{t})\\
            &\le f(B_t) + |\OPT_t| \cdot \tfrac \beta k f(B_t) \tag{by  Ineq. \ref{eq:e_s}}\\
            &\le (1 + \beta)f(B_t).            \tag{because $|\OPT_t| \leq k$}
        \end{align*}
        Note, the second inequality comes from submodularity.
    \end{proof}
    As a second preliminary step, we argue that the elements in $B_t$ that are not included to the current solution $S_t$ only account for a small fraction of $f(B_t)$.
    \begin{lemma}
    \label{lem:SvsV}
        After each insertion $e_t$, the following inequality holds:
        \[
            f(B_t) \ge \left(1+\frac \beta k\right)^k f(B_t \setminus S_t).
        \]
    \end{lemma}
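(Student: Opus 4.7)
The plan is to exploit the additive growth condition that governs when an element is inserted into the benchmark set: by line~\ref{line:test} of the algorithm, every time an element $e$ is appended to a benchmark set $B$, one has $f(e\mid B)\ge \tfrac{\beta}{k}f(B)$, and so $f(B+e)\ge (1+\tfrac{\beta}{k})f(B)$. Since $S_t$ is defined as the last $k$ elements that were added to $B_t$, traversing the insertion history of $B_t$ from the earliest element of $S_t$ to the most recent one should give exactly $k$ multiplicative gains of $(1+\tfrac{\beta}{k})$, starting from the value $f(B_t\setminus S_t)$.

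More concretely, I would first dispose of the trivial regime $|B_t|\le k$: in this case the algorithm has never dropped any element from $S_t$, so $S_t=B_t$, whence $B_t\setminus S_t=\emptyset$ and $f(B_t\setminus S_t)=0$ by normalization, making the bound vacuous. In the main regime $|B_t|=m>k$, I would enumerate the insertions into $B_t$ in order as $e_{s_1},e_{s_2},\dots,e_{s_m}$, and define the partial benchmark sets $B^{(i)}=\{e_{s_1},\dots,e_{s_i}\}$, with $B^{(0)}=\emptyset$. Because $S_t$ retains precisely the last $k$ additions, we have the identity $B_t\setminus S_t=B^{(m-k)}$ and $B_t=B^{(m)}$.

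Next, the inclusion rule applied at the moment $e_{s_{i+1}}$ was inserted (when the benchmark set coincided with $B^{(i)}$) yields
\[
f(B^{(i+1)}) = f(B^{(i)}) + f(e_{s_{i+1}}\mid B^{(i)}) \ge \left(1+\tfrac{\beta}{k}\right)f(B^{(i)}).
\]
Iterating this inequality for $i=m-k,m-k+1,\dots,m-1$ — exactly $k$ steps — gives
\[
f(B_t)=f(B^{(m)})\ge \left(1+\tfrac{\beta}{k}\right)^{k}f(B^{(m-k)})=\left(1+\tfrac{\beta}{k}\right)^{k}f(B_t\setminus S_t),
\]
which is the claimed bound. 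There is no real obstacle here; the only subtlety is being careful that the inclusion condition at time $s_{i+1}$ is stated with respect to the benchmark set as it existed \emph{at that moment}, namely $B^{(i)}=B_{s_{i+1}-1}$, rather than with respect to the final $B_t$ — an important distinction because submodularity would otherwise push the inequality in the wrong direction.
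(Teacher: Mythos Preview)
Your proof is correct and follows essentially the same route as the paper: both arguments iterate the multiplicative growth $f(B^{(i+1)})\ge(1+\tfrac{\beta}{k})f(B^{(i)})$ coming from the insertion test across the last $k$ additions to the benchmark set. Your treatment is in fact slightly cleaner, since you handle the case $|B_t|\le k$ explicitly (where $B_t\setminus S_t=\emptyset$), whereas the paper collapses both cases into the line ``the Lemma follows by recalling that $\ell\le k$.''
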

    \begin{proof}
        The elements in the current solution are naturally sorted according to the order in which they are inserted in the stream and then added to the solution: $S_t = \{s_{t_1}, s_{t_2}, \dots s_{t_\ell}\}$. Element $s_{t_\ell}$ is the last one added and, clearly, $\ell \le k$ and $t_{\ell} \le t.$
        Each one of these $e_{t_i}$ elements has been added to the solution because it passed the value test: $
            f(s_{t_i} \mid B_{t_i-1})\ge \tfrac \beta k f(B_{t_i-1}). $
            
        Now, set $B_{t_i-1}$ can be rewritten in terms of the current benchmark set $B_t$ and the elements in the solution $S_t$: $B_{t_i-1} = B_t \setminus \{s_{t_i}, \dots, s_{t_\ell}\}$, so the previous inequality can be rewritten as
        \[
            f(s_{t_i} \mid B_t \setminus \{s_{t_i}, \dots, s_{t_\ell}\}) \ge  \tfrac \beta k f(B_t \setminus \{s_{t_i}, \dots, s_{t_\ell}\}).
        \]
        If we add to both sides of the above inequality the term ${f(B_t \setminus \{s_{t_i}, \dots, s_{t_\ell}\})}$, we get that
        \[
            f(B_t \setminus \{s_{t_i+1}, \dots, s_{t_\ell}\}) \ge \left(1+\frac \beta k\right) f(B_t \setminus \{s_{t_i}, \dots, s_{t_\ell}\}).
        \]
        Iterating the above argument we get the desired bound:
        \begin{align*}
            f(B_t) &\ge \left(1+\tfrac \beta k\right) f(B_t \setminus \{s_{t_\ell}\}) \ge \left(1+\tfrac \beta k\right)^2 f(B_t \setminus \{s_{t_{\ell-1}}, s_{t_\ell}\})\ge \dots \ge \left(1+\tfrac \beta k\right)^\ell f(B_t \setminus S_t).
        \end{align*}
        The Lemma follows by recalling that $\ell \le k$.
    \end{proof}
    We now have all the ingredients to analyze \window.
    
    \begin{theorem}
        \label{thm:1consistent}
        \window is $1$-consistent and maintains a $3.147 + O(\nicefrac 1k)$ approximation.
    \end{theorem}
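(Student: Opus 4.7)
The $1$-consistency is immediate from the pseudocode: on any insertion, the algorithm either leaves $S_{t-1}$ untouched, or adds the new element $e_t$ and evicts at most the oldest element, so $|S_t \setminus S_{t-1}| \le 1$. All the work is in the approximation bound.

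My plan is to chain the two preceding lemmata via one elementary submodularity inequality. Since $S_t$ and $B_t \setminus S_t$ are disjoint and their union is $B_t$, submodularity yields
\[
    f(B_t) \;\le\; f(S_t) + f(B_t \setminus S_t).
\]
Substituting the bound of Lemma~\ref{lem:SvsV}, $f(B_t \setminus S_t) \le (1+\beta/k)^{-k} f(B_t)$, and rearranging gives
\[
    f(S_t) \;\ge\; \Bigl(1 - (1+\tfrac{\beta}{k})^{-k}\Bigr) f(B_t).
\]
Combining with Lemma~\ref{lem:OPTvsV} produces
\[
    \frac{f(\OPT_t)}{f(S_t)} \;\le\; \frac{1+\beta}{1 - (1+\beta/k)^{-k}}.
\]

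\paragraph{Asymptotic estimate and choice of $\beta$.}
Next I would argue that $(1+\beta/k)^{-k} = e^{-\beta} + O(1/k)$ (standard Taylor expansion, with the implicit constant depending only on the constant $\beta$), so the ratio above equals
\[
    \frac{1+\beta}{1 - e^{-\beta}} + O\!\left(\tfrac{1}{k}\right).
\]
A short calculus exercise shows that the function $\beta \mapsto (1+\beta)/(1-e^{-\beta})$ is minimized around $\beta \approx 1.14$, where it evaluates to approximately $3.146$; plugging in $\beta = 1.14$ yields the claimed $3.147 + O(1/k)$ bound.

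\paragraph{Main obstacle.}
The only non-routine aspect is the tuning of $\beta$ and making the $(1+\beta/k)^k \to e^\beta$ approximation quantitative enough to absorb the error into the $O(1/k)$ term without altering the constant $3.147$; this is a standard estimate but needs to be handled cleanly so that the constant in front is tight. Everything else is bookkeeping on top of Lemmata~\ref{lem:OPTvsV} and~\ref{lem:SvsV}.
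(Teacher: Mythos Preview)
Your proposal is correct and is essentially the paper's own argument: the key submodularity step $f(B_t)\le f(S_t)+f(B_t\setminus S_t)$ combined with Lemma~\ref{lem:SvsV} is exactly how the paper derives $f(S_t)\ge\bigl(1-(1+\beta/k)^{-k}\bigr)f(B_t)$, after which Lemma~\ref{lem:OPTvsV} and the $(1+\beta/k)^k\to e^\beta$ estimate finish the job. The only cosmetic difference is that the paper makes the $O(1/k)$ error explicit via the two-sided bound $(1+\beta/k)^k\le e^\beta\le(1+\beta/k)^k(1-\beta^2/k)^{-1}$ rather than invoking a Taylor expansion.
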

    \begin{proof}
        First observe that the algorithm is indeed $1$-consistent: every time the solution $S_t$ changes, exactly one element is inserted and exactly one is removed from it.
        
        We move our attention to the approximation guarantee. We start by noting that
        \begin{align*}
            f(B_t) + \left(1+\tfrac \beta k\right)^k f(S_t)             &\ge \left(1+\tfrac \beta k\right)^k [f(S_t) + f(B_t \setminus S_t)] \tag{\Cref{lem:SvsV}}\\
            &\ge \left(1+\tfrac \beta k\right)^k f(B_t). \tag{by submodularity}
        \end{align*}
        By rearranging terms
        and applying \Cref{lem:OPTvsV} we get:
        \begin{align}
            f(S_t) &\ge \frac{\left(1+\frac \beta k\right)^k - 1}{\left(1+\frac \beta k\right)^k} f(B_t)
            \label{eq:multiplier}
            \ge \frac{\left(1+\frac \beta k\right)^k - 1}{\left(1+\frac \beta k\right)^k(1+\beta)} f(\OPT_t).
        \end{align}
        We conclude the proof by providing a general lower bound for the multiplier of the right-hand side of the last inequality. We know that the following simple chain of inequality holds:
        \[
            \left(1+\frac \beta k\right)^k \le e^\beta \le \left(1+\frac \beta k\right)^k \left(1-\frac {\beta^2} k\right)^{-1}
        \]
        Plugging the above inequality into the multiplier in \Cref{eq:multiplier}, we have
        \begin{align*}
            \frac{\left(1+\frac \beta k\right)^k - 1}{\left(1+\frac \beta k\right)^k(1+\beta)} \ge \frac{e^{\beta}-1}{e^{\beta}(1+\beta)} - \frac{\beta^2}{k(1+\beta)}\ge 0.3178 - \frac 1{k}. \tag{$\beta = 1.14$}
        \end{align*}
        Taking the inverse yields the desired factor.
    \end{proof}

\section[Epsilon-Swapping]{\eswapping}
\label{sec:eswapping}

    In this section we present and analyze the \eswapping algorithm, which exhibits a better approximation factor than both \swapping and \window. We refer to the pseudocode for further details. 
    There are two differences with respect to \window. First, the way in which elements in the solution are swapped out: it is not the ``oldest'' element to be removed, but one with small enough value. This is formalized in the routine \swap,
    which takes as input a set $S$ and an element $x$, and is responsible for inserting $x$ into $S$; if $S$ already contains $k$ elements, then $x$ is swapped with an element $r$ in $S$ with marginal value not larger than the average value of $S$ (so to maintain the cardinality of $S$ bounded by $k$). Note, such an element $r$  always exists by submodularity and a simple averaging argument:
    \[
        f(S) \ge \sum_{x \in S} f(x \mid S - x) \ge k \cdot \min_{x \in S} f(x \mid S -x)\,.
    \]
    The second difference is that after the arrival of each element and possibly its addition to the current solution, the algorithm performs up to $N \in \Tilde O(\nicefrac{1}{\e})$ additional swaps from $V_t$ into the solution, using the same rule and subroutine as for newly arriving elements. The additional swaps performed by \eswapping drive the maintained solution closer to a local optimum defined as follows.

    \begin{algorithm}[t]
    \caption{$\swap(S,x)$} \label{alg:swap}
    \begin{algorithmic}[1]
    \STATE \textbf{Input:} Set $S$ and element $x$
    \STATE \textbf{Environment:} Function $f$ and cardinality $k$
    \STATE \textbf{if} $|S| < k$, \textbf{then} \textbf{return} $S + x$
    \STATE Let $r \in S$ be any element s.t. $f(r \mid S - r) \le \nicefrac {f(S)}k $
    \STATE \textbf{return} $S - r + x$
    \end{algorithmic}
    \end{algorithm}
    
 \begin{algorithm}[t]
 \caption{\eswapping} 
    \begin{algorithmic}[1]
    \STATE \textbf{Input:} Precision parameter $\e$
    \STATE \textbf{Environment:} Stream $V$, function $f$, cardinality $k$
    \STATE $\phi \gets \frac{\sqrt 5 + 1}{2}$, $N \gets \lceil \frac 1 \e \log_\phi \frac {12}\e \rceil $
    \STATE $S_0 \gets \emptyset$ and $t \gets 1$
    \FOR{$e_t$ new element arriving}
        \IF{$f(e_t\mid S_{t-1}) \ge \frac \phi k f(S_{t-1}) $}\label{line:sc1}
            \STATE $S_t \gets \swap(S_{t-1}, e_t)$
        \ENDIF
        \FOR{$i = 1, \dots, N$}
            \IF{$\exists \, x \in V_t$ such that $f(x\mid S_{t}) \ge \frac \phi k f(S_{t}) $}\label{line:sc2}
                \STATE $S_t \gets \swap(S_{t}, x)$
            \ENDIF
        \ENDFOR
        \STATE $t \gets t + 1$
    \ENDFOR
    \end{algorithmic}
    \end{algorithm}
    
    \begin{definition}
    \label{def:local}
        We say that a dynamic solution $S_t$ is a local optimum if there exists no element $x$ in $V_t$ such that $f(x\mid S_t) \ge \tfrac \phi k f(S_t)$.
    \end{definition}
    The improved approximation guarantee stems from the fact that at any point in time, either the solution maintained by the algorithm was a local optimum not too far in the past, or many swaps were performed since.
    \begin{theorem}\label{thm:eswapping}
        \eswapping maintains a $(\phi + 1 + 9\e)$-approximation, where $\phi \approx 1.619$ is the golden ratio, and is $\tilde O(\nicefrac 1{\e})$-consistent.
    \end{theorem}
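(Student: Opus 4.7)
The plan is to handle consistency and approximation separately.

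Consistency is immediate from the structure of the algorithm: between stream steps $t-1$ and $t$ the algorithm invokes $\swap$ at most $1+N$ times (once for the newly arrived element $e_t$ and up to $N$ times in the inner loop), and each call to $\swap$ modifies at most one element of the maintained set. Since $N = \lceil (1/\e)\log_\phi(12/\e) \rceil$, this gives $|S_t \setminus S_{t-1}| \le 1 + N = \tilde{O}(1/\e)$.

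For the approximation I would first establish two structural facts. First, a local-optimum bound: if $S \subseteq V$ satisfies the local-optimum condition of \Cref{def:local}, then $f(\OPT) \le (1+\phi)f(S)$. This follows from essentially the same submodularity argument as in \Cref{lem:OPTvsV}, with $\phi$ in place of $\beta$ and the test applied directly to $S$. Second, a per-swap growth bound: any call $\swap(S,x)$ under the condition $f(x \mid S) \ge (\phi/k)f(S)$ returns $S'$ with $f(S') \ge (1 + (\phi-1)/k)\,f(S)$, using submodularity together with the guarantee $f(r \mid S-r) \le f(S)/k$ for the swapped-out element $r$ to estimate the swap-in minus swap-out gain.

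Fix now a time $t$ and let $\tau \le t$ be the largest index at which $S_\tau$ is a local optimum with respect to $V_\tau$. If $\tau = t$, the result is immediate from the local-optimum bound. Otherwise, by maximality of $\tau$, at each time $t' \in \{\tau+1, \dots, t\}$ the inner loop must have executed all $N$ of its iterations as full swaps, since a premature exit would correspond to an intermediate local optimum and contradict the choice of $\tau$. The per-swap bound then yields $f(S_t) \ge f(S_\tau)(1 + (\phi-1)/k)^{N(t-\tau)}$, while $f(\OPT_t)$ can be bounded via the decomposition $\OPT_t = (\OPT_t \cap V_\tau) \cup (\OPT_t \setminus V_\tau)$: the first part is controlled by local optimality at $\tau$ exactly as in \Cref{lem:OPTvsV} in terms of $f(S_\tau)$, and the second is controlled by pairing each of the at most $t-\tau$ new elements with a swap performed by the algorithm in the same interval. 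The calibration $N = \lceil (1/\e)\log_\phi(12/\e) \rceil$ is chosen so that the accumulated geometric progress dominates the deficit attributable to the new elements, collapsing everything into a $(\phi + 1 + 9\e)$ bound.

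The main obstacle I expect is precisely controlling the contribution of the new optimum elements $\OPT_t \setminus V_\tau$ to $f(\OPT_t)$, since their marginals against $S_\tau$ are not directly constrained by local optimality at time $\tau$. Showing that the budget of $N$ inner swaps per arrival suffices to absorb this deficit into the additive $9\e$ slack will require a careful pairing of each uncovered new element with a swap performed by the algorithm, together with invoking the logarithmic factor in $N$ to make the per-swap multiplicative growth outrun the per-element deficit. Tuning the constants so that everything collapses to exactly $\phi + 1 + 9\e$, rather than a weaker $\phi + 1 + \Theta(1)$ bound, is the delicate calibration at the heart of the analysis.
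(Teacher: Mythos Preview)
Your consistency argument and the two structural facts (the $(1+\phi)$ local-optimum bound and the $(1+(\phi-1)/k)$ per-swap growth) are correct and are exactly what the paper establishes. The gap is in the case analysis that follows.

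You set $\tau$ to be the last local optimum and then rely entirely on geometric growth plus a vague ``pairing'' to control $\OPT_t \setminus V_\tau$. This breaks when $t-\tau$ is small. If, say, $t-\tau=1$ and $k$ is large, the multiplicative gain $(1+(\phi-1)/k)^{N}$ is essentially $1$, so geometric progress gives you nothing. Worse, local optimality at $\tau$ only bounds marginals \emph{with respect to $S_\tau$}; since swaps remove elements, $S_\tau \not\subseteq S_t$ in general, so you cannot transfer the bound $f(e\mid S_\tau) < (\phi/k)f(S_\tau)$ to a bound on $f(e\mid S_t)$ by submodularity. Your proposed decomposition therefore yields at best $(2+\phi)f(S_t)$ in this regime, not $(1+\phi+O(\e))f(S_t)$. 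You correctly identify this as the ``main obstacle,'' but the pairing idea as sketched does not close it.

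The paper's proof differs in two essential ways. First, it takes $\tau=\max\{t',\,t-\lceil\e k\rceil\}$ rather than $t'$, so that in every case the number of newly arrived elements $|I|$ is $O(\e k)$. Second, it introduces an interpolating sequence $A_\ell$ between $S_\tau$ and $S_t$ and tracks the swapped-out elements $r_\ell$: a telescoping argument shows $f(S_t)-f(S_\tau)\ge(\phi-1)\sum_\ell f(r_\ell\mid A_{\ell-1}-r_\ell)$, and a separate claim shows $f(I\cup A)\le (1+4\e)f(S_t)+\sum_\ell f(r_\ell\mid A_{\ell-1}-r_\ell)$, where the $4\e$ term is exactly where the cap $|I|\le O(\e k)$ is used. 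Since $S_\tau\subseteq A$, local optimality at $\tau$ can now be applied inside $f(I\cup A)$, and combining the two inequalities with the golden-ratio identity $\phi(\phi-1)=1$ gives the tight $(1+\phi+O(\e))$ bound in the ``recent local optimum'' case. The ``distant local optimum'' case ($t'<t-\lceil\e k\rceil$) is handled by an \emph{induction on $t$}: the inductive hypothesis plus geometric growth over $\ge \e k N$ swaps forces $f(\OPT_\tau)\le \e f(S_t)$, after which the same $f(I\cup A)$ bound finishes the job. Both the induction and the cap on $\tau$ are missing from your outline and are not recoverable from the pairing heuristic you describe.
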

    
      Before proving the theorem, 
      we introduce a notational convention. 
      During the execution of the algorithm, elements may be added and removed multiple times from the dynamic solution. 
      Rather than thinking of such an element as one and the same element, it is convenient to think of this happening to multiple distinct copies of the same element so that each element is added and removed at most once.
       This allows us to work with sets instead of multi-sets in the analysis.

    \begin{proof}[Proof of \cref{thm:eswapping}]
        The bound on the consistency is immediate, as for each insertion there are at most $N+1 = \lceil \nicefrac 1{\e} \log_{\phi} \nicefrac {12}\e\rceil  +1 = \tilde O(\nicefrac{1}{\e})$ changes in the solution. The rest of the proof is devoted to the analysis of the approximation guarantee, which we prove by induction on the number of insertions. For the first element $e_1$ of the stream there is nothing to prove, as $S_1 = V_1 = \{e_1\}$. We analyze now the generic insertion $e_t$, with $t > 1$, assuming that the desired approximation holds for any previous insertion $s < t$. Let $t'$ be the last insertion index before $t$ in which the solution $S_{t'}$ was a local optimum (see \Cref{def:local}),
        and denote with $\tau$ the maximum between $t'$ and $(t - \lceil \e k \rceil)$. We remark that $t'$ is at least $1$, so $\tau$ is well defined. We have that $\OPT_t$ is the optimum after insertion $e_t$, and $\OPT_{\tau}$ is the optimum after insertion $e_{\tau}$. Sets $V_t$ and $V_{\tau}$, $V_t$ and $V_{\tau}$ are defined in a similar way. Consider how the solution changed between $S_{\tau}$ and $S_t$: some elements in $S_{\tau}$ were removed, some were added and remained in $S_{t}$, while others were added and later removed, possibly multiple times. To ease the analysis, we sort these inserted elements $s_1, s_2, \dots, s_{L}$ according to the order in which they were added (recall that multiple ``copies'' of the same element may appear in this sequence); this induces a natural sorting on the removed elements: we call $r_\ell$ the element that was swapped out to make room for $s_\ell$ (to avoid confusion, if no element was swapped out, we let $r_{\ell}$ be a dummy element with no value). We now define an auxiliary sequence of sets $A_\ell$ that interpolates between the solution at insertion $\tau$ and that at insertion $t$: $            A_\ell = S_{\tau} \cup \{s_1, \dots, s_{\ell}\} \setminus \{r_1, \dots, r_{\ell}\}.$
        
        It holds that $S_{\tau} = A_0$, while $S_t = A_L$. Moreover, the definition of the auxiliary sets motivates this relation:
        \begin{equation}
        \label{eq:relation}
            A_{\ell-1} + s_\ell = A_\ell + r_\ell.
        \end{equation}
        By a telescopic argument, the above relation and the design of \eswapping we have the following claim.
        \begin{claim}
        \label{cl:telescopic}
            The following inequality holds true:
            \[
                f(S_{t}) \ge f(S_{\tau}) + (\phi - 1) \sum_{\ell=1}^L f(s_\ell \mid A_{\ell-1}- r_\ell). 
            \]
        \end{claim}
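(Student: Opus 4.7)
My plan is to prove the claim by telescoping along the auxiliary sequence $A_0=S_\tau, A_1,\dots,A_L=S_t$: for each $\ell$, I lower-bound the per-swap increment $f(A_\ell)-f(A_{\ell-1})$ by $(\phi-1)\,f(s_\ell \mid A_{\ell-1}-r_\ell)$, and then sum over $\ell$, using $f(A_0)=f(S_\tau)$ and $f(A_L)=f(S_t)$ at the endpoints.

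The starting point is the clean algebraic identity that follows from $A_\ell = A_{\ell-1}-r_\ell+s_\ell$ (equivalently, Eq.~\ref{eq:relation}):
\[
f(A_\ell)-f(A_{\ell-1}) \;=\; f(s_\ell \mid A_{\ell-1}-r_\ell) \;-\; f(r_\ell \mid A_{\ell-1}-r_\ell).
\]
Thus the task reduces to bounding the removed marginal $f(r_\ell \mid A_{\ell-1}-r_\ell)$ by a sufficiently small fraction of $f(s_\ell \mid A_{\ell-1}-r_\ell)$. For this I would combine the two algorithmic invariants baked into \eswapping. First, the \swap subroutine's selection rule guarantees $f(r_\ell \mid A_{\ell-1}-r_\ell) \le f(A_{\ell-1})/k$ when $|A_{\ell-1}|=k$; in the corner case $|A_{\ell-1}|<k$, $r_\ell$ is the dummy element with zero marginal and the per-step bound is immediate. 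Second, the very swap condition that caused $s_\ell$ to enter the solution --- triggered on line~\ref{line:sc1} when $s_\ell$ is a freshly arrived element, or on line~\ref{line:sc2} when it is an inner-loop reinsertion --- gives $f(s_\ell \mid A_{\ell-1}) \ge (\phi/k)\, f(A_{\ell-1})$, and submodularity promotes this to $f(s_\ell \mid A_{\ell-1}-r_\ell) \ge (\phi/k)\, f(A_{\ell-1})$ because $A_{\ell-1}-r_\ell\subseteq A_{\ell-1}$. Chaining the two eliminates $f(A_{\ell-1})$ in favour of $f(s_\ell \mid A_{\ell-1}-r_\ell)$, substitution back into the decomposition produces the per-step lower bound, and telescoping over $\ell=1,\dots,L$ yields the claim.

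The main obstacle I anticipate is purely bookkeeping: ensuring that for each $\ell$, the state of the maintained solution at the moment $s_\ell$ was actually added equals $A_{\ell-1}$, so that both invariants above can be applied verbatim with that set in the roles of ``current $S$'' for line~\ref{line:sc1}/line~\ref{line:sc2} and for the call to \swap. This is precisely the purpose of the ``distinct copies'' convention introduced just before the claim: every addition event, including repeated re-additions of the same element through the inner loop, is accounted as its own $s_\ell$, paired with its own (possibly dummy) $r_\ell$. This yields a one-to-one correspondence between swap events and the auxiliary sequence $A_0,A_1,\dots,A_L$, making the identification of ``the set on which the invariants hold'' unambiguous and closing the loop.
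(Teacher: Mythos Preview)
Your telescoping strategy and the two algorithmic invariants you invoke are exactly the right ones, and your decomposition
\[
f(A_\ell)-f(A_{\ell-1}) \;=\; f(s_\ell\mid A_{\ell-1}-r_\ell)\;-\;f(r_\ell\mid A_{\ell-1}-r_\ell)
\]
is correct (and in fact slightly cleaner than the paper's \cref{eq:aux1}, as it avoids one extra submodularity step). However, the step ``substitution back produces the per-step lower bound $(\phi-1)\,f(s_\ell\mid A_{\ell-1}-r_\ell)$'' does not go through. Your chain yields only
\[
f(r_\ell\mid A_{\ell-1}-r_\ell)\;\le\;\tfrac{1}{k}\,f(A_{\ell-1})\;\le\;\tfrac{1}{\phi}\,f(s_\ell\mid A_{\ell-1}-r_\ell),
\]
hence
\[
f(A_\ell)-f(A_{\ell-1}) \;\ge\; \Bigl(1-\tfrac{1}{\phi}\Bigr)\,f(s_\ell\mid A_{\ell-1}-r_\ell) \;=\; \tfrac{\phi-1}{\phi}\,f(s_\ell\mid A_{\ell-1}-r_\ell),
\]
a factor $\phi$ short of the stated coefficient. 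This is not a slackness in your argument but a genuine obstruction: take $k=1$, an additive $f$ with $f(r)=1$ and $f(s)=\phi$; the swap triggers, $f(A_\ell)-f(A_{\ell-1})=\phi-1\approx 0.618$, whereas $(\phi-1)\,f(s\mid\emptyset)=(\phi-1)\phi=1$, so the per-step bound you aim at is false.

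The resolution is that the printed statement contains a typo: the summand should be $f(r_\ell\mid A_{\ell-1}-r_\ell)$, not $f(s_\ell\mid A_{\ell-1}-r_\ell)$. The paper's own proof of the claim ends with the $r_\ell$ version, and every later use of \Cref{cl:telescopic} (the combinations with \Cref{cl:local1} and \Cref{cl:local2}, and the final cancellation via $\phi(\phi-1)=1$) is written in terms of $f(r_\ell\mid A_{\ell-1}-r_\ell)$. With that correction, your argument is essentially the paper's: from $f(s_\ell\mid A_{\ell-1}-r_\ell)\ge \phi\,f(r_\ell\mid A_{\ell-1}-r_\ell)$ you immediately get the per-step bound $(\phi-1)\,f(r_\ell\mid A_{\ell-1}-r_\ell)$, and telescoping concludes.
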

        \begin{proof}[Proof of \Cref{cl:telescopic}]
            The change in value between two consecutive auxiliary sets can be decomposed as follows exploiting the relation in \Cref{eq:relation}: 
            \begin{equation}
            \label{eq:aux1}
                f(A_{\ell}) - f(A_{\ell-1}) = f(s_{\ell}\mid A_{\ell-1}) - f(r_\ell \mid A_{\ell}).
            \end{equation}
            Now, the marginal value of $s_\ell$ with respect to $A_{\ell-1}$ is at least $\nicefrac \phi k \cdot f(A_{\ell-1})$, by the swapping conditions in lines \ref{line:sc1} and \ref{line:sc2} of \eswapping. Furthermore, by the design of \swap, we know that the element $r_{\ell}$ that is removed to make room for $s_{\ell}$ has small value. In formula, 
            \begin{equation}
            \label{eq:aux2}
                f(s_{\ell}\mid A_{\ell-1}) \ge \tfrac{\phi}k f(A_{\ell-1}) \ge \phi f(r_{\ell} \mid A_{\ell-1}-r_{\ell}) 
            \end{equation}
            We can now prove directly the inequality in the statement:
            \begin{align*}
                f(S_t) &- f(S_\tau)\\ &=\sum_{\ell=1}^L f(A_\ell) - f(A_{\ell-1}) \tag{telescopic argument}\\&= \sum_{\ell=1}^L f(s_\ell\mid A_{\ell - 1}) - f(r_\ell \mid A_{\ell}) \tag{by Eqn. \ref{eq:aux1}}\\
                &\ge \sum_{\ell=1}^L f(s_\ell\mid A_{\ell - 1}) - f(r_\ell \mid A_{\ell-1} - r_\ell)\displaybreak[0]\\
                &\ge (\phi - 1) \sum_{\ell=1}^L f(r_\ell \mid A_{\ell-1}- r_\ell). \tag{by Eqn. \ref{eq:aux2}}
            \end{align*}
            Note, the second to last inequality follows by submodularity and the fact that $A^{\ell-1} - r_\ell = A_{\ell} -s_{\ell} \subseteq A_\ell $, due to the relation in \Cref{eq:relation}.
        \end{proof}
        
        Denote now with $I$ the set of elements that were inserted between $e_\tau$ and $e_t$ : $I = V_t \setminus V_\tau$, and with $A$ the set of all the elements that were, at some point, in the solution between time $\tau$ and $t$: $A = \cup_{\ell=\tau}^t A_{\ell}$. It is possible to relate the value of $S_t$ with that of the elements in $I$ and $A$:
        \begin{claim}
        \label{cl:local1}
            The following inequality holds true: 
            \[
                f(I \cup A) \le (1+4\e)f(S_t) + \sum_{\ell=1}^L f(r_\ell \mid A_{\ell-1}-r_\ell). 
            \]
        \end{claim}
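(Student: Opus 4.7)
The plan is to split $f(I\cup A)=f(A)+f(I\setminus A\mid A)$ and bound each piece separately: a telescope over $A$ produces the $\sum_\ell f(r_\ell\mid A_{\ell-1}-r_\ell)$ term, while the residual $f(I\setminus A\mid A)$ contributes the $4\e f(S_t)$ slack via the cardinality bound $|I|\le \lceil \e k\rceil$.

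For the first piece, the distinct-copies convention gives $A=S_\tau\cup\{s_1,\dots,s_L\}=S_t\sqcup\{r_1,\dots,r_L\}$. I would introduce auxiliary sets $C_\ell=S_t\cup\{r_{\ell+1},\dots,r_L\}$ for $\ell=0,\dots,L$, so that $C_L=S_t$, $C_0=A$, and $C_{\ell-1}=C_\ell+r_\ell$. A telescope yields $f(A)-f(S_t)=\sum_{\ell=1}^L f(r_\ell\mid C_\ell)$. The key observation is that
\[
A_{\ell-1}-r_\ell=S_\tau\cup\{s_1,\dots,s_{\ell-1}\}\setminus\{r_1,\dots,r_\ell\}\subseteq S_\tau\cup\{s_1,\dots,s_L\}\setminus\{r_1,\dots,r_\ell\}=C_\ell,
\]
so submodularity upgrades each term to $f(r_\ell\mid C_\ell)\le f(r_\ell\mid A_{\ell-1}-r_\ell)$, yielding $f(A)\le f(S_t)+\sum_\ell f(r_\ell\mid A_{\ell-1}-r_\ell)$.

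For the second piece, submodularity gives $f(I\setminus A\mid A)\le \sum_{y\in I\setminus A} f(y\mid A)$. Any $y=e_s\in I\setminus A$ arrived at some $\tau<s\le t$ and was never added to the dynamic solution, so in particular the test at line \ref{line:sc1} failed at arrival: $f(y\mid S_{s-1})<\tfrac{\phi}{k}f(S_{s-1})$. A brief computation based on the design of \swap\ shows that each invocation scales $f(S)$ by at least $1+(\phi-1)/k$, so $f(S_s)$ is non-decreasing in $s$ and $f(S_{s-1})\le f(S_t)$; moreover $S_{s-1}$ coincides with some $A_m$, so $S_{s-1}\subseteq A$ and submodularity gives $f(y\mid A)\le f(y\mid S_{s-1})<\tfrac{\phi}{k}f(S_t)$. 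Summing over $|I\setminus A|\le |I|\le \lceil \e k\rceil$ terms gives $f(I\setminus A\mid A)\le \phi(\e+1/k)f(S_t)\le 4\e f(S_t)$ in the relevant regime (say $\e k\ge 1$). Combining the two pieces yields the claim; the main subtlety I anticipate is verifying the inclusion $A_{\ell-1}-r_\ell\subseteq C_\ell$ cleanly under the distinct-copies bookkeeping, after which everything reduces to standard submodularity plus the cardinality bound on $|I|$.
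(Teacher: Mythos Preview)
Your proof is correct and takes a somewhat different route from the paper. The paper telescopes in one pass: it sorts all of $(I\cup A)\setminus S_t$ by discard time into $g_1,\dots,g_J$, writes $f(I\cup A)-f(S_t)=\sum_j f(g_j\mid S_t\cup G_j)$, and then argues case by case that the running set $S_t\cup G_j$ contains either $S_{t'-1}$ (when $g_j\in I\setminus A$) or $A_{\ell-1}-r_\ell$ (when $g_j=r_\ell$). You instead separate the two kinds of elements up front via $f(I\cup A)=f(A)+f(I\setminus A\mid A)$: the first piece is telescoped over the sets $C_\ell=S_t\cup\{r_{\ell+1},\dots,r_L\}$, and the residual is bounded term by term using the failed insertion test at arrival together with $S_{s-1}\subseteq A$. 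Your decomposition makes the needed inclusions ($A_{\ell-1}-r_\ell\subseteq C_\ell$ and $S_{s-1}\subseteq A$) completely transparent, whereas the paper's single telescope requires more care with the discard ordering for the analogous containments to hold. Both arguments rely on the same implicit assumption $\lceil \e k\rceil\le 2\e k$ (equivalently $\e k\ge 1$) to arrive at the $4\e$ slack.
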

        \begin{proof}[Proof of \Cref{cl:local1}]
            Consider any element $g$ in $I \cup A$. We have three cases: either element $g$ belongs to $S_t$, $g$ was added to the solution but was later swapped out, or it failed the swapping condition in line \ref{line:sc1} upon insertion. Now, sort these elements according to the order in which they were discarded by the algorithm: $(I \cup A) \setminus S_t = \{g_1, \dots, g_J\}$ ($g \in I \setminus A$ is discarded upon insertion, while $g \in A \setminus (I \cup S_t)$ is discarded when gets swapped out by the solution). For simplicity, denote with $G_j$ the set of the first $j-1$ such elements, we have the following two facts: (i) if $g_j \in I \setminus A$, then it means that $g_j = e_{t'}$ for some $t' \in \{\tau, \dots, t\}$, and the solution $S_{t'}\subseteq S_t \cup G_j$; (ii) if $g_j \in A \setminus (I \cup S_t)$, then it means that $g_j = r_\ell$ for some $\ell \in \{1, \dots, L\}$, and it holds that $A_{\ell} - r_{\ell}\subseteq S_t \cup G_j$.
            
            Exploiting these two facts and submodularity, we have the following chain of inequalities:
            \begin{align*}
                f(I \cup A) - f(S_t) &= \sum_{j=1}^J f(g_j \mid S_t \cup G_j)\\
                &\le \hspace*{-7pt}\sum_{e_{t'} \in I \setminus (S_t \cup A)} \hspace*{-10pt} f(e_{t'} \mid S_{{t'}-1}) + \sum_{\ell=1}^L f(r_\ell \mid A_{\ell-1}-r_\ell)
                \\
                &\le \frac \phi k \sum_{e_{t'} \in I \setminus (S_t \cup A)} \hspace*{-10pt} f(S_{{t'}-1}) +  \sum_{\ell=1}^L f(r_\ell \mid A_{\ell-1}-r_\ell)\\
                &\le 4\e f(S_t) +  \sum_{\ell=1}^L f(r_\ell \mid A_{\ell-1}-r_\ell).
            \end{align*}
            Note, the second inequality holds by the fact that $e_j$ failed the swapping condition in line \ref{line:sc1} upon insertion; while the third inequality follows by observing that the sequence of $f(S_{t'})$ is non-decreasing, there are at most $2 \e k$ elements in $I \setminus (S_t \cup A)$, and $\phi \in (1,2)$.
        \end{proof}
        Another useful property of the auxiliary sets $A_{\ell}$ is to provide a clean way to formalize that adding new elements to the solution multiplicatively improves the value of the solution.
        \begin{claim}
        \label{cl:multiplicative}
            The following inequality holds true:
            \[
                f(S_t) \ge \left(1 + \frac{\phi - 1}{k}\right)^L f(S_{\tau}).
            \]
        \end{claim}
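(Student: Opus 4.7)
The plan is to upgrade the per-step change formula used in Claim~\ref{cl:telescopic} into a multiplicative, rather than additive, improvement between consecutive auxiliary sets $A_{\ell-1}$ and $A_{\ell}$, and then iterate. Concretely, the starting point is the identity
\[
    f(A_\ell) - f(A_{\ell-1}) = f(s_\ell \mid A_{\ell-1}) - f(r_\ell \mid A_\ell),
\]
which was already derived as Eqn.~\ref{eq:aux1} in the proof of Claim~\ref{cl:telescopic}. The goal is to lower bound the first term and upper bound the second in terms of $f(A_{\ell-1})$.

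For the positive term, the swapping conditions in lines~\ref{line:sc1}--\ref{line:sc2} guarantee $f(s_\ell \mid A_{\ell-1}) \ge (\phi/k)\, f(A_{\ell-1})$, which is exactly Eqn.~\ref{eq:aux2}. For the negative term, I would first apply submodularity: since $A_{\ell-1} - r_\ell \subseteq A_{\ell-1} + s_\ell - r_\ell = A_\ell$ by the relation in Eqn.~\ref{eq:relation}, we get $f(r_\ell \mid A_\ell) \le f(r_\ell \mid A_{\ell-1} - r_\ell)$. Then the defining property of the \swap subroutine (which picks $r_\ell$ from the current solution $A_{\ell-1}$ with marginal value at most the average) gives $f(r_\ell \mid A_{\ell-1} - r_\ell) \le f(A_{\ell-1})/k$. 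Combining yields
\[
    f(A_\ell) - f(A_{\ell-1}) \ge \frac{\phi}{k}\, f(A_{\ell-1}) - \frac{1}{k}\, f(A_{\ell-1}) = \frac{\phi - 1}{k}\, f(A_{\ell-1}),
\]
so that $f(A_\ell) \ge \bigl(1 + (\phi-1)/k\bigr) f(A_{\ell-1})$.

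Iterating this bound from $\ell = 1$ to $L$ and using $A_0 = S_\tau$ and $A_L = S_t$ gives the claim. The only mildly subtle point is that some auxiliary steps may not remove a real element (when $|A_{\ell-1}| < k$, \swap just appends $s_\ell$); by the convention that $r_\ell$ is a dummy with zero marginal value, the bound on the negative term becomes vacuous and the inequality only gets stronger, so the iteration still goes through. I do not expect a genuine obstacle here beyond correctly invoking the \swap guarantee with respect to the auxiliary set $A_{\ell-1}$ that acts as the ``current solution'' at that step.
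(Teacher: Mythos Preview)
Your proof is correct and follows essentially the same route as the paper: start from the identity in Eqn.~\ref{eq:aux1}, use submodularity to replace $f(r_\ell\mid A_\ell)$ by $f(r_\ell\mid A_{\ell-1}-r_\ell)$, bound the gain and loss via the swapping condition and the \swap guarantee, and iterate. The only cosmetic difference is that the paper bounds the loss as $f(r_\ell\mid A_{\ell-1}-r_\ell)\le \tfrac{1}{\phi}f(s_\ell\mid A_{\ell-1})$ (the second inequality in Eqn.~\ref{eq:aux2}) before applying the swap threshold, whereas you apply the \swap bound $f(r_\ell\mid A_{\ell-1}-r_\ell)\le f(A_{\ell-1})/k$ directly; both yield the same multiplicative factor.
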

        \begin{proof}[Proof of \Cref{cl:multiplicative}]
            Consider the generic subsequent terms $\ell-1$ and $\ell$, for $\ell = 1, \dots, L$. Starting from rearranging \Cref{eq:aux1}, we have the following: 
            \begin{align*}
                f(A_{\ell}) &= f(s_\ell \mid A_{\ell-1}) - f(r_{\ell} \mid A_{\ell}) + f(A_{\ell-1})\\
                &\ge f(s_\ell \mid A_{\ell-1}) - f(r_{\ell} \mid A_{\ell-1}-r_\ell) + f(A_{\ell-1})\\
                &\ge \frac{\phi-1}{\phi}f(s_{\ell} \mid A_{\ell-1}) + f(A_{\ell-1}) \tag{by Eqn. \ref{eq:aux2}}\\
                &\ge \left(1+\frac{\phi-1}{k}\right) f(A_{\ell-1}),
            \end{align*}
            where the first inequality follows by submodularity and the relation in \Cref{eq:relation}, while the last one by the design of \swap: an element is added to the solution only if its marginal contribution is at least a $\nicefrac \phi k$ fraction of $f(A_{\ell-1}).$
            Applying iteratively the above argument from $S_t = A_L$ to $S_{\tau} = A_0$ yields the desired result.
        \end{proof}
        We now have all the ingredients to directly address the crux of the proof.
        We have two cases we analyze separately: either $S_{\tau}$ is a local optimum, or it is not. 
        \paragraph{$S_\tau$ is a local optimum.} If $S_{\tau}$ is a local optimum, then all the elements in $\OPT_t$ that arrived before $e_{\tau}$, i.e., $\OPT_t \cap V_{\tau}$ have low marginal contribution with respect to $S_\tau$. Formally, we have the following result.
        \begin{claim}
        \label{cl:local2}
            If $S_{\tau}$ is a local optimum, then
            \[
                (1+4\e)f(S_t) + \sum_{\ell=1}^L f(r_\ell \mid A_{\ell-1}-r_\ell) \ge f(\OPT_t) - \phi f(S_\tau).
            \]
        \end{claim}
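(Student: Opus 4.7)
My plan is to first reduce Claim~\ref{cl:local2} to a clean statement about $f(I\cup A)$ using Claim~\ref{cl:local1}, and then bound $f(I\cup A)$ from below by $f(\OPT_t) - \phi f(S_\tau)$ using local optimality together with a submodular telescoping over the elements of $\OPT_t$ missing from $I \cup S_\tau$.

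Reduction step. Claim~\ref{cl:local1} already gives
\[
(1+4\e)\,f(S_t) + \sum_{\ell=1}^L f(r_\ell \mid A_{\ell-1}-r_\ell) \;\ge\; f(I\cup A),
\]
so it suffices to prove $f(I\cup A) \ge f(\OPT_t) - \phi\,f(S_\tau)$. Since $A_0 = S_\tau$ and $A = \bigcup_{\ell} A_\ell$, we have $S_\tau \subseteq A$, hence by monotonicity $f(I\cup A) \ge f(I \cup S_\tau)$. Thus it is enough to show
\[
f(I\cup S_\tau) \;\ge\; f(\OPT_t) - \phi\, f(S_\tau).
\]

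Bounding $f(\OPT_t)$. I would split $\OPT_t$ according to whether its elements lie in $I \cup S_\tau$ or not. By monotonicity and submodularity,
\[
f(\OPT_t) \;\le\; f(\OPT_t \cup I \cup S_\tau) \;\le\; f(I\cup S_\tau) + \sum_{e \in \OPT_t \setminus (I \cup S_\tau)} f(e \mid I\cup S_\tau).
\]
The key observation is that any $e \in \OPT_t \setminus (I \cup S_\tau)$ must lie in $V_\tau \setminus S_\tau$, because $\OPT_t \subseteq V_t = V_\tau \cup I$ and $e \notin I$ forces $e \in V_\tau$. Now I apply the assumption that $S_\tau$ is a local optimum (\Cref{def:local}): every $e \in V_\tau$ satisfies $f(e \mid S_\tau) < \tfrac{\phi}{k} f(S_\tau)$, and submodularity gives $f(e \mid I \cup S_\tau) \le f(e \mid S_\tau) < \tfrac{\phi}{k} f(S_\tau)$. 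Since $|\OPT_t \setminus (I\cup S_\tau)| \le |\OPT_t| \le k$, the sum is at most $\phi\, f(S_\tau)$. Rearranging yields $f(I\cup S_\tau) \ge f(\OPT_t) - \phi f(S_\tau)$, which combined with the reduction above proves the claim.

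The only thing that requires a bit of care is making sure $\OPT_t \setminus (I \cup S_\tau) \subseteq V_\tau$ (so that local optimality at time $\tau$ actually applies to these elements), and that the factor of $\phi$ comes out cleanly from $k$ terms each of size $\tfrac{\phi}{k}f(S_\tau)$; everything else is a direct application of monotonicity, submodularity, and Claim~\ref{cl:local1}.
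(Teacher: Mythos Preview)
Your proof is correct and follows essentially the same approach as the paper: reduce via \Cref{cl:local1} to lower bounding $f(I\cup A)$, then use $S_\tau \subseteq A$, submodularity, and local optimality to show that the elements of $\OPT_t \cap V_\tau$ together contribute at most $\phi f(S_\tau)$ in marginal value. The only cosmetic difference is that you pass through $f(I\cup S_\tau)$ by monotonicity before applying submodularity, whereas the paper works directly with $f(I\cup A)$ and applies submodularity using $S_\tau \subseteq I\cup A$; the underlying argument is the same.
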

        \begin{proof}[Proof of \Cref{cl:local2}]
             To prove this result, it suffices to argue that the right-hand side of the inequality in the statement is at most $f(I \cup A)$, as it is then possible to conclude the argument by combining it with \Cref{cl:local1}. We have the following:
             \begin{align*}
                 f(\OPT_t) &- f(I \cup A)\\
                 &\le f(\OPT_t \mid I \cup A) \tag{by monotonicity} \\
                 &=  f(\OPT_t \cap V_{\tau} \mid I \cup A) \tag{since $I = V_t \setminus V_{\tau}$} \\
                 &\le  \sum_{e \in \OPT_t \cap V_{\tau}} f(e \mid S_{\tau})\\
                 &\le  \phi f(A_{\tau})\tag{$A_\tau$ local optimum}.
             \end{align*}
             Note, the second inequality holds by submodularity as $S_{\tau}$ is contained into $A$. Reordering the terms of the inequality we get the desired lower bound on $f(I \cup A).$
        \end{proof}
        Summing the inequality in \Cref{cl:local2} with $\phi$ times the inequality in \Cref{cl:telescopic} yields the desired bound, thus concluding the argument for the first case:  
         \begin{align*}
            (1+\phi + 4\e) f(S_t)
            &\ge f(\OPT_t) + [\phi (\phi-1) - 1] \sum_{\ell=1}^L f(r_\ell \mid A_{\ell-1}-r_\ell) = f(\OPT_t).
        \end{align*}
        In the previous inequality we crucially used the definition of the golden ratio as the solution of $\phi^2 - \phi - 1 = 0.$

        \paragraph{$S_{\tau}$ is not a local optimum.} If $S_{\tau}$ is not a local optimum, then it means that $L$, the total number of swaps between insertion $e_\tau$ and $e_t$, is at least $\e \cdot k \cdot N$, where $N$ is defined in the pseudocode as $\lceil \nicefrac 1\e \cdot \log_\phi \nicefrac {12}\e\rceil$. If we complement this with \Cref{cl:multiplicative} we get:
        \begin{align*}
            f(S_t) &\ge  f(S_{\tau}) \cdot \left(1 + \frac{\phi - 1}{k}\right)^{k \log_\phi \nicefrac {12}\e} 
            \\
            &\ge \frac{{12}}{\e} f(S_{\tau})\tag{because $(1 + \tfrac xn)^n \ge 1+x$ }\\
            &\ge \frac{{12}}{(1+\phi+9\e)\e} f(\OPT_\tau),
        \end{align*}
        where in the last inequality we crucially used the inductive assumption. By rearranging and using that $\e \in (0,1)$ and $\phi \in (1,2)$, we get the following simple relation, which proves that the value of the elements arrived up to time $\tau$ can be safely ignored:
        \begin{equation}
            \label{eq:non_local}
            f(\OPT_\tau) \le \e f(S_t).
        \end{equation}
        We have all the ingredient to deal with the last case: 
        \begin{align*}
            f(\OPT_t) 
            &\le f(\OPT_t \cap V_{\tau}) + f(I) \tag{by submodularity}\\
            &\le f(\OPT_\tau) + f(I) \tag{by optimality of $\OPT_\tau$}\\
            &\le \e f(S_t) + f(I) \tag{by Equation \ref{eq:non_local}}\\
            &\le \e f(S_t) + f(I \cup A) \tag{by monotonicity}\\
            &\le (1+ 5\e) f(S_t) + \sum_{\ell=1}^L f(r_\ell \mid A_{\ell-1}-r_\ell) \tag{by \Cref{cl:local1}}\\
            &\le \left(1 + \frac{1}{\phi-1} + 9\e \right) f(S_t) \tag{by \Cref{cl:telescopic}}\\
            &= (1+\phi + 9 \e) f(S_t),
        \end{align*}
        where in the last equality we used the definition of $\phi$ as the golden ratio. This last case concludes the proof.
    \end{proof}
    
\section{Experiments}
\label{sec:experiments}

    In this section we evaluate the performance of our two
    algorithms on real-world data sets\footnote{The code of the experiments is available at \url{https://github.com/fedefusco/Consistent-Submodular}.}. We report here three case
    studies, while we defer to Appendix C other (qualitatively
    analogous) results, as well as further implementation details. 
    We present additional experimental results that illustrate the gains in worst-case approximation guarantee in \cref{swap:tight}.
    As benchmarks we consider the \swapping algorithm which provides a $4$-approximation and is $1$-consistent and the \sieve algorithm, a $(2+\e)$-approximation that is not consistent (see \Cref{app:instability} for further details on the instability of the algorithm).
    
    \begin{figure*}
    \begin{subfigure}{.32\textwidth}
      \centering
      \includegraphics[width=\linewidth]{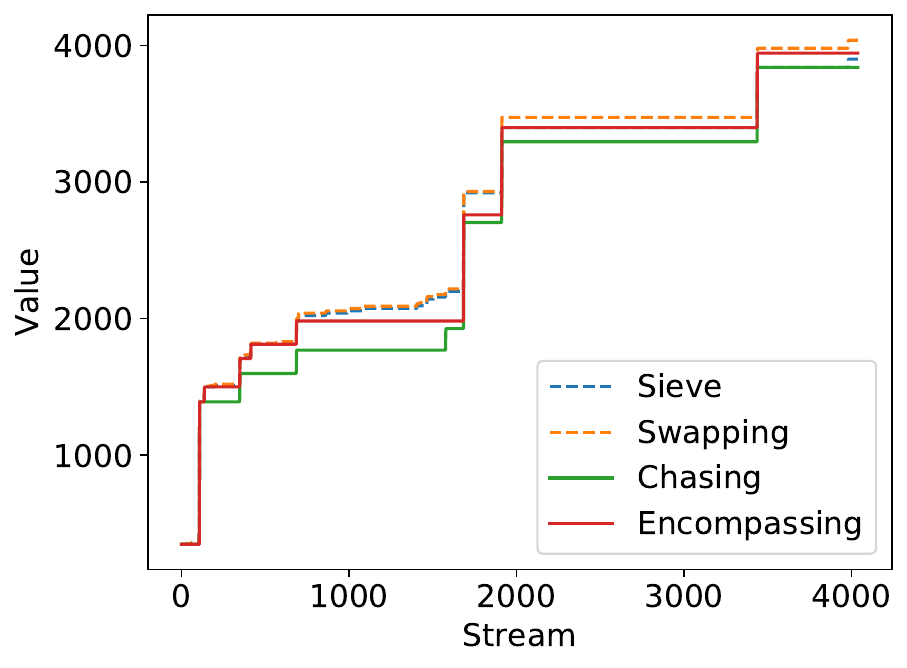}
      \caption{Influence Maximization on Facebook}
      \label{fig:fb-objective}
    \end{subfigure}
    \begin{subfigure}{.32\textwidth}
      \centering
      \includegraphics[width=0.96\linewidth]{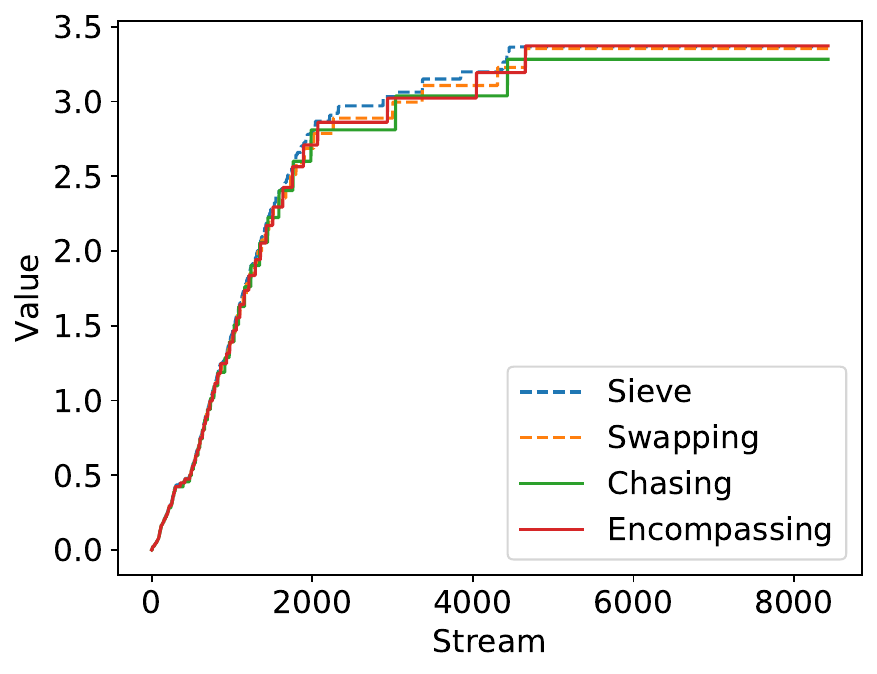}
      \caption{k-medoid Clustering on RunInRome}
      \label{fig:rome-kmedoid-objective}
    \end{subfigure}
    \begin{subfigure}{.32\textwidth}
      \centering
      \includegraphics[width=0.95\linewidth]{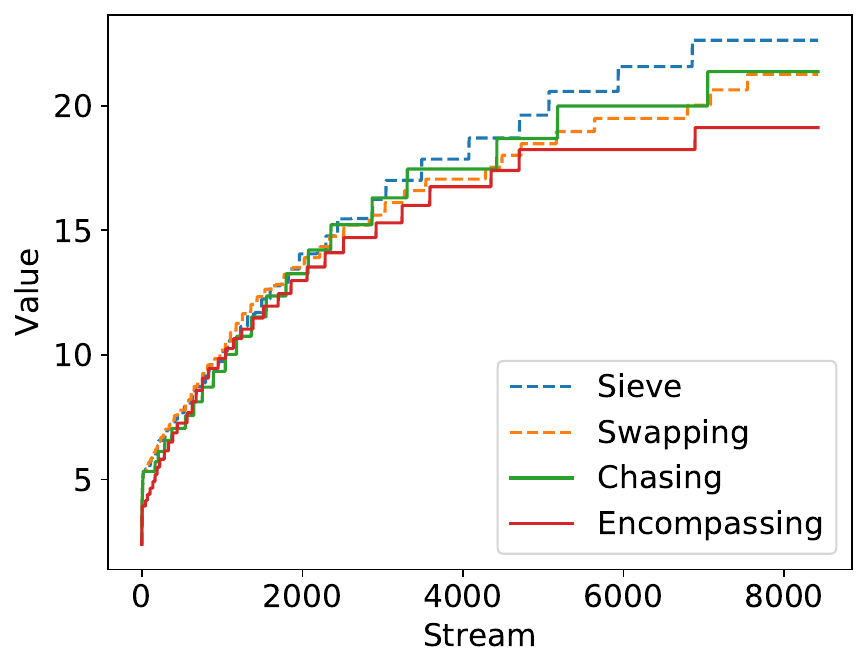}
      \caption{LogDet Maximization on RunInRome}
      \label{fig:rome-logdet-objective}
    \end{subfigure}
    \begin{subfigure}{.32\textwidth}
      \centering
      \includegraphics[width=\linewidth]{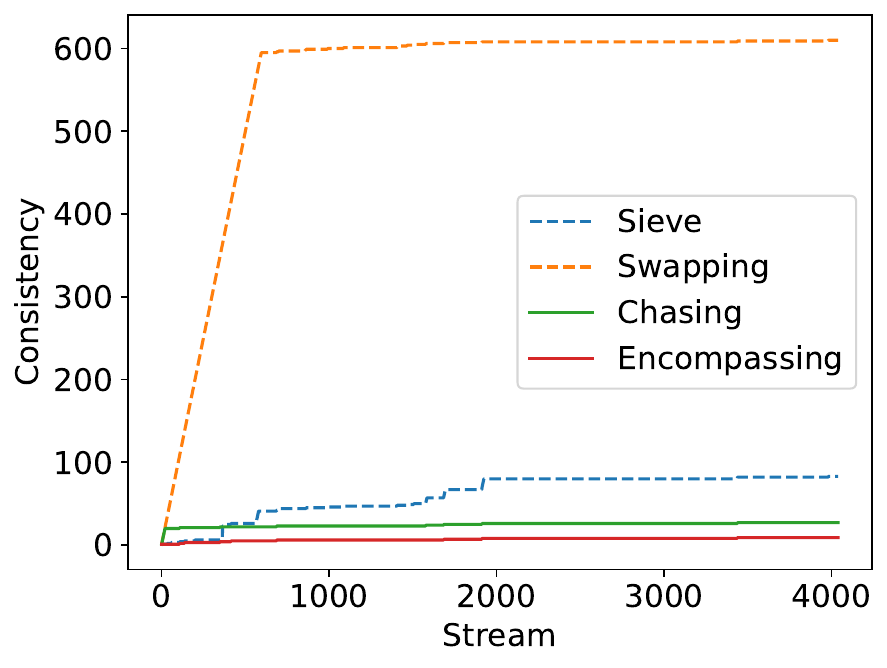}
      \caption{Influence Maximization on Facebook}
      \label{fig:fb-consistency}
    \end{subfigure}%
    \begin{subfigure}{.32\textwidth}
      \centering
      \hspace*{0.2cm}
      \includegraphics[width=0.98\linewidth]{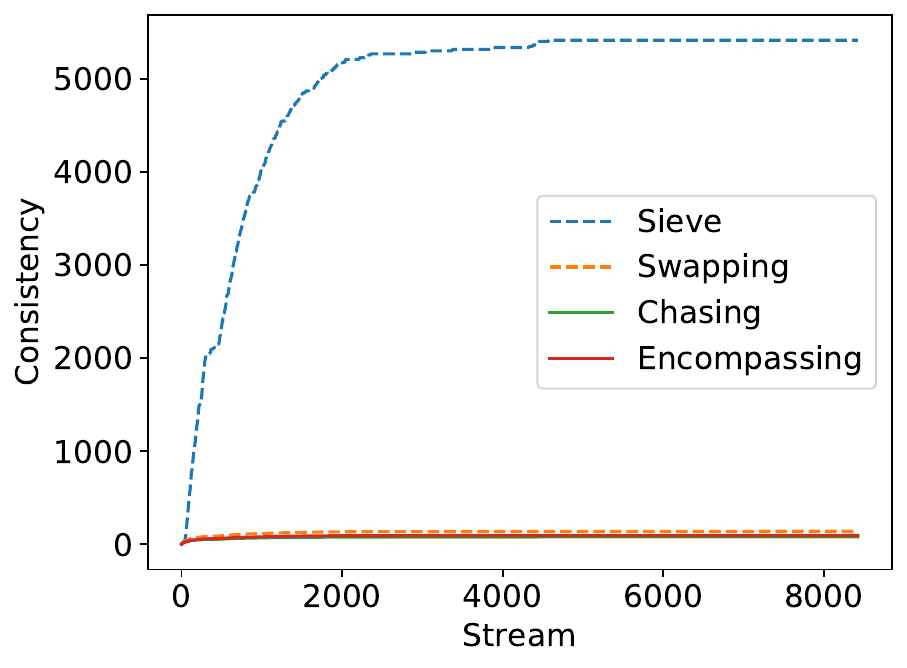}
      \caption{k-medoid Clustering on RunInRome}
      \label{fig:rome-kmedoid-consistency}
    \end{subfigure}
    \begin{subfigure}{.32\textwidth}
      \centering
      \hspace*{0.5cm}
      \includegraphics[width=0.99\linewidth]{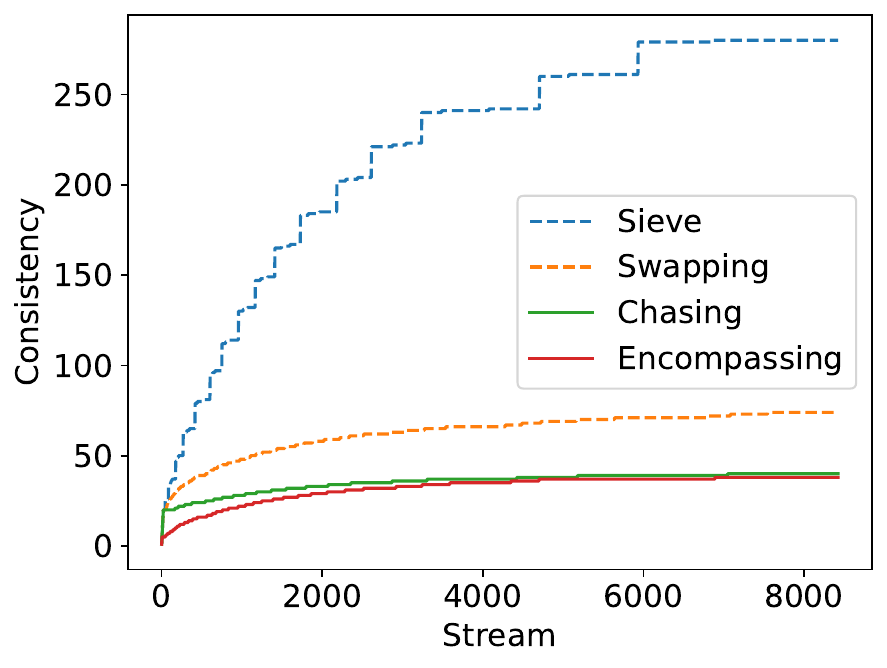}
      \caption{LogDet Maximization on RunInRome}
      \label{fig:rome-logdet-consistency}
    \end{subfigure}
    \caption{Experimental Results. The first row reports the objective values, the second one the cumulative consistency.}
    \label{fig:results}
    \end{figure*}
    
    \paragraph{Influence Maximization.}
        For our first case study we consider the problem of influence maximization on a social network graph \citep[e.g.,][]{Norouzi-FardTMZ18,HalabiMNTT20}, where the goal is to maintain a subset of the nodes to ``influence'' the rest of the graph. In such application, consistency is crucial as changing nodes may entail costs relative to terminating and issuing new contracts. We use the Facebook dataset from \citet{McAuleyL12} that consists of $4039$ nodes $V$ and $88234$ edges $E$ and, as measure of influence we consider the monotone and submodular dominating function:
        \[
        f(S) = |\{v \in V: \exists s \in S \text{ and } (s,v) \in E\}|.
        \]
                
    \paragraph{Summarizing Geolocation Data.}
        Our second and third case study concern the problem of maintaining a stable and representative summary from a sequence of geographical coordinates \citep[e.g.,][]{MirzasoleimanK017,DuettingFLNZ22}. We use the RunInRome dataset \citep{Federico22}, that contains $8425$ positions recorded by running activity in Rome, Italy. We consider two different objective functions used in geographical data summarization: the $k$-medoid and the kernel log-det. 
        Consider the $k$-medoid function on the metric set $(V,d)$ 
        $
            L(S) = \frac{1}{|V|} \sum_{v \in V} \min_{e \in S} d(e,v).
        $
        By introducing an auxiliary point $e_0 \in V$ we can turn $L$ into a monotone submodular function \citep{MirzasoleimanKSK13}: 
        \[
            f(S) = L(e_0) - L(S + e_0). 
        \]
        In our experiment we set $e_0$ to be the first point of each dataset.
        For the second objective, consider a kernel matrix $K$ that depends on the pair-wise distances of the points, i.e. $ K_{i,j} = \exp\{-\frac{d(i,j)^2}{h^2}\}$ 
        where $d(i,j)$ denotes the distance between the $i^{th}$ and the $j^{th}$ point in the dataset and $h$ is some constant. Following \citet{Krause14}, another common monotone submodular objective is $f(S) = \log \det (I + \alpha K_{S,S})$, where $I$ is the $|S|-$dimensional identity matrix, $K_{S,S}$ is the principal sub-matrix corresponding to the entries in $S$, and $\alpha$ is a regularization parameter (that we set to $10$ in the experiments).
        
        \paragraph{Experimental Results.}
        In \cref{fig:results}, we present the performance of our algorithms and the benchmarks. The first row (\Cref{fig:fb-objective,fig:rome-kmedoid-objective,fig:rome-logdet-objective}) features the objective value of the dynamic solution maintained by the algorithms, while the second row (\Cref{fig:fb-consistency,fig:rome-kmedoid-consistency,fig:rome-logdet-consistency}) reports the cumulative number of changes in the solutions. The experiments show that our algorithms, \window and \eswapping, 
        achieve comparable value as \swapping and \sieve; while achieving notable savings in the total number of changes. 
        {For instance, in the setting of \cref{fig:fb-consistency}, \swapping is significantly less consistent on aggregate than our algorithms (around a factor $25$), while \sieve changes the solution about $3-4$ times more often. The superior cumulative consistency of our algorithms is also clear in the other experiments; in the settings of  \cref{fig:rome-kmedoid-consistency} and \cref{fig:rome-logdet-consistency} \sieve performs order of magnitudes more changes than either of our algorithms (about 500x and 10x), while \swapping performs between 50\% and 100\% more.} 
        The strict ``insertion rules'' implemented by our two algorithms seem to 
        guarantee that only the crucial elements of the dataset are added to the solution. This phenomenon empirically induces a desirable global stability over the entire stream -- which goes beyond the theoretical per-round guarantees -- at the cost of possibly discarding moderately good elements.
    
\section{Conclusion}

    In this paper, we initiate the study of consistency in submodular maximization. Consistency is a natural measure of stability of the online solution maintained by an algorithm, and has been extensively studied for clustering, facility location and online learning. 
    We present two consistent algorithms, \window and \eswapping, that exhibit a different approximation-consistency trade off ($3.147 + O(\nicefrac{1}{k})$ and $1$-consistent vs. $2.619 + O(\varepsilon)$ and $O(\nicefrac 1\e)$-consistent). They both substantially improve on the state of the art (a consistent $4$-approximation), moving the approximability boundary closer to the optimal approximation factor, as evidenced by the information-theoretical lower bound of $2$ that we prove to hold for any consistent deterministic algorithm. 
    Besides closing the remaining gap in the approximation factor, our work raises many natural and compelling questions. 
    First, the investigation of randomized
    algorithms may lead to better results, even beyond the lower
    bound of $2$. Second, while some known algorithms already
    exhibit consistency, the explicit study of consistent algorithms for possibly non-monotone submodular functions
    and more general constraints (e.g., matroids and knapsack)
    may lead to improved results.

\section*{Acknowledgments}

Federico Fusco is supported by the FAIR (Future Artificial Intelligence Research) project
PE0000013, funded by the NextGenerationEU program within the PNRR-PE-AI scheme (M4C2, investment 1.3, line on Artificial Intelligence), the ERC Advanced
Grant 788893 AMDROMA ``Algorithmic and Mechanism Design Research in Online Markets'', the PNRR MUR project IR0000013-SoBigData.it, and the MUR PRIN project ``Learning in Markets and Society''.

\bibliographystyle{plainnat}
\bibliography{bibliography}

\begin{thebibliography}{38}
\providecommand{\natexlab}[1]{#1}
\providecommand{\url}[1]{\texttt{#1}}
\expandafter\ifx\csname urlstyle\endcsname\relax
  \providecommand{\doi}[1]{doi: #1}\else
  \providecommand{\doi}{doi: \begingroup \urlstyle{rm}\Url}\fi

\bibitem[Amanatidis et~al.(2021)Amanatidis, Fusco, Lazos, Leonardi,
  Marchetti{-}Spaccamela, and Reiffenh{\"{a}}user]{AmanatidisFLLMR21}
Georgios Amanatidis, Federico Fusco, Philip Lazos, Stefano Leonardi, Alberto
  Marchetti{-}Spaccamela, and Rebecca Reiffenh{\"{a}}user.
\newblock Submodular maximization subject to a knapsack constraint:
  Combinatorial algorithms with near-optimal adaptive complexity.
\newblock In \emph{{ICML}}, pages 231--242, 2021.

\bibitem[Amanatidis et~al.(2022)Amanatidis, Fusco, Lazos, Leonardi, and
  Reiffenh{\"{a}}user]{AmanatidisFLLR22}
Georgios Amanatidis, Federico Fusco, Philip Lazos, Stefano Leonardi, and
  Rebecca Reiffenh{\"{a}}user.
\newblock Fast adaptive non-monotone submodular maximization subject to a
  knapsack constraint.
\newblock \emph{J. Artif. Intell. Res.}, 74:\penalty0 661--690, 2022.

\bibitem[Bach(2010)]{Bach10}
Francis~R. Bach.
\newblock Structured sparsity-inducing norms through submodular functions.
\newblock In \emph{{NIPS}}, pages 118--126, 2010.

\bibitem[Badanidiyuru et~al.(2014)Badanidiyuru, Mirzasoleiman, Karbasi, and
  Krause]{BadanidiyuruMKK14}
Ashwinkumar Badanidiyuru, Baharan Mirzasoleiman, Amin Karbasi, and Andreas
  Krause.
\newblock Streaming submodular maximization: massive data summarization on the
  fly.
\newblock In \emph{{KDD}}, pages 671--680, 2014.

\bibitem[Bairi et~al.(2015)Bairi, Iyer, Ramakrishnan, and Bilmes]{BairiIRB15}
Ramakrishna Bairi, Rishabh~K. Iyer, Ganesh Ramakrishnan, and Jeff~A. Bilmes.
\newblock Summarization of multi-document topic hierarchies using submodular
  mixtures.
\newblock In \emph{ACL}, pages 553--563, 2015.

\bibitem[Banihashem et~al.(2024)Banihashem, Biabani, Goudarzi, Hajiaghayi,
  Jabbarzade, and Monemizadeh]{BanihashemBGHJM24}
Kiarash Banihashem, Leyla Biabani, Samira Goudarzi, MohammadTaghi Hajiaghayi,
  Peyman Jabbarzade, and Morteza Monemizadeh.
\newblock Dynamic algorithms for matroid submodular maximization.
\newblock In \emph{{SODA}}, pages 3485--3533. {SIAM}, 2024.

\bibitem[Bhattacharya et~al.(2022)Bhattacharya, Lattanzi, and
  Parotsidis]{bhattacharya2022efficient}
Sayan Bhattacharya, Silvio Lattanzi, and Nikos Parotsidis.
\newblock Efficient and stable fully dynamic facility location.
\newblock In \emph{NeurIPS}, 2022.

\bibitem[Chakrabarti and Kale(2015)]{ChakrabartiK15}
Amit Chakrabarti and Sagar Kale.
\newblock Submodular maximization meets streaming: matchings, matroids, and
  more.
\newblock \emph{Math. Program.}, 154:\penalty0 225--247, 2015.

\bibitem[Cohen{-}Addad et~al.(2019)Cohen{-}Addad, Hjuler, Parotsidis, Saulpic,
  and Schwiegelshohn]{cohen2019fully}
Vincent Cohen{-}Addad, Niklas Hjuler, Nikos Parotsidis, David Saulpic, and
  Chris Schwiegelshohn.
\newblock Fully dynamic consistent facility location.
\newblock In \emph{NeurIPS}, pages 3250--3260, 2019.

\bibitem[Cohen{-}Addad et~al.(2022)Cohen{-}Addad, Lattanzi, Maggiori,
  efficient, and stabledis]{cohen2022online}
Vincent Cohen{-}Addad, Silvio Lattanzi, Andreas Maggiori, Nikos efficient, and
  stabledis.
\newblock Online and consistent correlation clustering.
\newblock In \emph{{ICML}}, pages 4157--4179, 2022.

\bibitem[Das and Kempe(2011)]{DasK11}
Abhimanyu Das and David Kempe.
\newblock Submodular meets spectral: Greedy algorithms for subset selection,
  sparse approximation and dictionary selection.
\newblock In \emph{{ICML}}, pages 1057--1064, 2011.

\bibitem[D\"utting et~al.(2022)D\"utting, Fusco, Lattanzi, Norouzi{-}Fard, and
  Zadimoghaddam]{DuettingFLNZ22}
Paul D\"utting, Federico Fusco, Silvio Lattanzi, Ashkan Norouzi{-}Fard, and
  Morteza Zadimoghaddam.
\newblock Deletion robust submodular maximization over matroids.
\newblock In \emph{{ICML}}, pages 5671--5693, 2022.

\bibitem[D\"utting et~al.(2023)D\"utting, Fusco, Lattanzi, Norouzi{-}Fard, and
  Zadimoghaddam]{DuettingFLNZ23}
Paul D\"utting, Federico Fusco, Silvio Lattanzi, Ashkan Norouzi{-}Fard, and
  Morteza Zadimoghaddam.
\newblock Fully dynamic submodular maximization over matroids.
\newblock In \emph{{ICML}}, volume 202 of \emph{Proceedings of Machine Learning
  Research}, pages 8821--8835. {PMLR}, 2023.

\bibitem[Feige(1998)]{Feige98}
Uriel Feige.
\newblock A threshold of ln \emph{n} for approximating set cover.
\newblock \emph{J. {ACM}}, 45\penalty0 (4):\penalty0 634--652, 1998.

\bibitem[Fichtenberger et~al.(2021)Fichtenberger, Lattanzi, Norouzi{-}Fard, and
  Svensson]{fichtenberger2021consistent}
Hendrik Fichtenberger, Silvio Lattanzi, Ashkan Norouzi{-}Fard, and Ola
  Svensson.
\newblock Consistent k-clustering for general metrics.
\newblock In \emph{{SODA}}, pages 2660--2678, 2021.

\bibitem[Fusco(2022)]{Federico22}
Federico Fusco.
\newblock Run{I}n{R}ome {D}ataset, 2022.
\newblock
  \href{https://github.com/fedefusco/RunInRome-Dataset}{https://github.com/fedefusco/RunInRome-Dataset}.

\bibitem[Golovin and Krause(2011)]{GolovinK11}
Daniel Golovin and Andreas Krause.
\newblock Adaptive submodularity: Theory and applications in active learning
  and stochastic optimization.
\newblock \emph{J. Artif. Intell. Res.}, 42:\penalty0 427--486, 2011.

\bibitem[Guo et~al.(2021)Guo, Kulkarni, Li, and Xian]{guo2021consistent}
Xiangyu Guo, Janardhan Kulkarni, Shi Li, and Jiayi Xian.
\newblock Consistent k-median: Simpler, better and robust.
\newblock In \emph{{AISTATS}}, pages 1135--1143, 2021.

\bibitem[Halabi et~al.(2020)Halabi, Mitrovic, Norouzi{-}Fard, Tardos, and
  Tarnawski]{HalabiMNTT20}
Marwa~El Halabi, Slobodan Mitrovic, Ashkan Norouzi{-}Fard, Jakab Tardos, and
  Jakub Tarnawski.
\newblock Fairness in streaming submodular maximization: Algorithms and
  hardness.
\newblock In \emph{NeurIPS}, 2020.

\bibitem[Halabi et~al.(2023)Halabi, Fusco, Norouzi{-}Fard, Tardos, and
  Tarnawski]{HalabiFNTT23}
Marwa~El Halabi, Federico Fusco, Ashkan Norouzi{-}Fard, Jakab Tardos, and Jakub
  Tarnawski.
\newblock Fairness in streaming submodular maximization over a matroid
  constraint.
\newblock In \emph{{ICML}}, volume 202 of \emph{Proceedings of Machine Learning
  Research}, pages 9150--9171. {PMLR}, 2023.

\bibitem[Harper and Konstan(2016)]{movielens16}
F~Maxwell Harper and Joseph~A Konstan.
\newblock The {MovieLens} datasets: History and context.
\newblock \emph{ACM Trans. Interact. Intell. Syst.}, 5\penalty0 (4):\penalty0
  19, 2016.

\bibitem[Jaghargh et~al.(2019)Jaghargh, Krause, Lattanzi, and
  Vassilvitskii]{jaghargh2019consistent}
Mohammad Reza~Karimi Jaghargh, Andreas Krause, Silvio Lattanzi, and Sergei
  Vassilvitskii.
\newblock Consistent online optimization: Convex and submodular.
\newblock In \emph{{AISTATS}}, pages 2241--2250, 2019.

\bibitem[Kaggle(2020)]{Uber14}
Kaggle.
\newblock Uber pickups in {N}ew {Y}ork {C}ity, 2020.
\newblock
  \href{https://www.kaggle.com/fivethirtyeight/uber-pickups-in-new-york-city}{https://www.kaggle.com/fivethirtyeight/uber-pickups-in-new-york-city}.

\bibitem[Kazemi et~al.(2019)Kazemi, Mitrovic, Zadimoghaddam, Lattanzi, and
  Karbasi]{kazemi2019submodular}
Ehsan Kazemi, Marko Mitrovic, Morteza Zadimoghaddam, Silvio Lattanzi, and Amin
  Karbasi.
\newblock Submodular streaming in all its glory: Tight approximation, minimum
  memory and low adaptive complexity.
\newblock In \emph{{ICML}}, pages 3311--3320, 2019.

\bibitem[Krause and Golovin(2014)]{Krause14}
Andreas Krause and Daniel Golovin.
\newblock Submodular function maximization.
\newblock In \emph{Tractability}, pages 71--104. Cambridge University Press,
  2014.

\bibitem[Lattanzi and Vassilvitskii(2017)]{lattanzi2017consistent}
Silvio Lattanzi and Sergei Vassilvitskii.
\newblock Consistent k-clustering.
\newblock In \emph{{ICML}}, pages 1975--1984, 2017.

\bibitem[Lattanzi et~al.(2020)Lattanzi, Mitrovic, Norouzi{-}Fard, Tarnawski,
  and Zadimoghaddam]{LattanziMNTZ20}
Silvio Lattanzi, Slobodan Mitrovic, Ashkan Norouzi{-}Fard, Jakub Tarnawski, and
  Morteza Zadimoghaddam.
\newblock Fully dynamic algorithm for constrained submodular optimization.
\newblock In \emph{NeurIPS}, 2020.

\bibitem[Lin and Bilmes(2011)]{LinB11}
Hui Lin and Jeff~A. Bilmes.
\newblock A class of submodular functions for document summarization.
\newblock In \emph{{ACL}}, pages 510--520, 2011.

\bibitem[McAuley and Leskovec(2012)]{McAuleyL12}
Julian~J. McAuley and Jure Leskovec.
\newblock Learning to discover social circles in ego networks.
\newblock In \emph{{NIPS}}, pages 548--556, 2012.

\bibitem[Mirzasoleiman et~al.(2013)Mirzasoleiman, Karbasi, Sarkar, and
  Krause]{MirzasoleimanKSK13}
Baharan Mirzasoleiman, Amin Karbasi, Rik Sarkar, and Andreas Krause.
\newblock Distributed submodular maximization: Identifying representative
  elements in massive data.
\newblock In \emph{{NIPS}}, pages 2049--2057, 2013.

\bibitem[Mirzasoleiman et~al.(2017)Mirzasoleiman, Karbasi, and
  Krause]{MirzasoleimanK017}
Baharan Mirzasoleiman, Amin Karbasi, and Andreas Krause.
\newblock Deletion-robust submodular maximization: Data summarization with "the
  right to be forgotten".
\newblock In \emph{{ICML}}, volume~70 of \emph{Proceedings of Machine Learning
  Research}, pages 2449--2458. {PMLR}, 2017.

\bibitem[Mitrovic et~al.(2017)Mitrovic, Bogunovic, Norouzi{-}Fard, Tarnawski,
  and Cevher]{MitrovicBNTC17}
Slobodan Mitrovic, Ilija Bogunovic, Ashkan Norouzi{-}Fard, Jakub Tarnawski, and
  Volkan Cevher.
\newblock Streaming robust submodular maximization: {A} partitioned
  thresholding approach.
\newblock In \emph{{NIPS}}, pages 4557--4566, 2017.

\bibitem[Monemizadeh(2020)]{Monemizadeh20}
Morteza Monemizadeh.
\newblock Dynamic submodular maximization.
\newblock In \emph{NeurIPS}, 2020.

\bibitem[Nemhauser et~al.(1978)Nemhauser, Wolsey, and
  Fisher]{nemhauser1978analysis}
George~L Nemhauser, Laurence~A Wolsey, and Marshall~L Fisher.
\newblock An analysis of approximations for maximizing submodular set
  functions—i.
\newblock \emph{Math. Program.}, 14:\penalty0 265--294, 1978.

\bibitem[Norouzi{-}Fard et~al.(2018)Norouzi{-}Fard, Tarnawski, Mitrovic,
  Zandieh, Mousavifar, and Svensson]{Norouzi-FardTMZ18}
Ashkan Norouzi{-}Fard, Jakub Tarnawski, Slobodan Mitrovic, Amir Zandieh,
  Aidasadat Mousavifar, and Ola Svensson.
\newblock Beyond 1/2-approximation for submodular maximization on massive data
  streams.
\newblock In \emph{ICML}, pages 3826--3835, 2018.

\bibitem[Troyanskaya et~al.(2001)Troyanskaya, Cantor, Sherlock, Brown, Hastie,
  Tibshirani, Botstein, and Altman]{TroyanskayaCSBHTBA01}
Olga~G. Troyanskaya, Michael~N. Cantor, Gavin Sherlock, Patrick~O. Brown,
  Trevor Hastie, Robert Tibshirani, David Botstein, and Russ~B. Altman.
\newblock Missing value estimation methods for {DNA} microarrays.
\newblock \emph{Bioinform.}, 17\penalty0 (6):\penalty0 520--525, 2001.

\bibitem[Zheng et~al.(2014)Zheng, Jiang, Chellappa, and Phillips]{ZhengJCP14}
Jingjing Zheng, Zhuolin Jiang, Rama Chellappa, and P.~Jonathon Phillips.
\newblock Submodular attribute selection for action recognition in video.
\newblock In \emph{{NIPS}}, pages 1341--1349, 2014.

\bibitem[Łacki et~al.(2024)Łacki, Haeupler, Grunau, Jayaram, and
  Rozhoň]{lkacki2024fully}
Jakub Łacki, Bernhard Haeupler, Christoph Grunau, Rajesh Jayaram, and Václav
  Rozhoň.
\newblock Fully dynamic consistent k-center clustering.
\newblock In \emph{SODA}, pages 3463--3484, 2024.

\end{thebibliography}

\appendix

\section{Instability of known algorithms}
\label{app:instability}
    
    We propose here two instances that highlight the instability of known algorithms. The instance in \Cref{ex:covering} is such that both the optimal solution and the output of the greedy algorithm \citep{nemhauser1978analysis} change entirely after every insertion. We then briefly discuss, in \Cref{ex:exponential}, a simple instance that forces the \sieve algorithm \citep{BadanidiyuruMKK14} and its modified version \sieveplus \citep{kazemi2019submodular} to behave in a non-consistent way.
    \begin{example}
    \label{ex:covering}
        Let $\delta \in (0,1)$ be a small parameter used to break ties, and consider the following weighted covering instance, parameterized by an integer $i$ and cardinality constraint $k$. The base set $E$ is given by the pairs $\{(a,b), \text{ for } a,b \in \{0, \dots, i\}\}$. We refer to each pair $(a,b)$ as an item. The weights of the items are as follows: all items have unitary weight, but the following:
        \[
            \begin{cases}
                w_{(0,0)} =& 0\\
                w_{(a,0)} =& \delta \cdot  (2a + 1) \quad \text{ for } a \neq 0\\
                w_{(0,b)} =& \delta \cdot 2b \quad \text{ for } b \neq 0
            \end{cases}
        \]
        The weighted covering function is monotone submodular and is defined as follows: \[
            f(S) = \sum_{(a,b): \exists s \in S, (a,b) \in s} w_{(a,b)}.
        \]
        Note, $f$ is defined over subsets of $E$, not on items. The subsets of $E$ we consider in our instance are the rows and columns of $E$: $R_a$ is defined as $\{(a,0), \dots, (a,i)\}$, while $C_b$ is defined as $\{(0,b), \dots, (i,b)\}$. The stream is constructed as follows: $C_1, R_1, C_2, R_2, \dots, C_{\ell}, R_{\ell}, \dots, C_i, R_i$. Consider now what happens after $2k$ insertions. The optimal solution (which is the same output by running greedy on the elements arrived so far) is as follows: if the last arrived element is a row, then the optimal solution is given by the last $k$ arrived rows; conversely, if the last arrived element is a column, then the optimal solution is given by the last $k$ arrived columns. This means that the $k$ elements in the dynamic solution change after each insertion! Note, the elements in the first row and first column are only there for tie-breaking.
    \end{example}
    
    \begin{example}
    \label{ex:exponential}
        The \sieve algorithms lazily maintains a set of geometrically increasing active thresholds $O$ (of the type $\tau = (1+\e)^j$, for some $j \in \mathbb{Z}$ and input parameter $\e$) and a candidate solution for each one of them; then outputs the best of these candidates. In particular, when a new element $e_t$ arrives, with value way larger than all the previous ones, a new threshold is activated and the corresponding candidate solution $S_{\tau}$ is initiated ($S_{\tau} = \{e_t\}$). It is then clear that any instance characterized by elements with dramatically increasing values would force the algorithm to continuously change its solution. For instance, consider an additive function with $f(e_t) = 2^t$: after each insertion, the solution output by \sieve would be the singleton $\{e_t\}$. Playing with similar arguments, it is not hard to construct an instance that completely change solution every $k$ insertions (e.g., $f(e_t) = k^{\lceil \nicefrac tk \rceil}$).
    \end{example}

\section{The analysis of \swapping is tight}\label{swap:tight}

    The \swapping algorithm is known to provide a $4$-approximation to the optimum \citep{ChakrabartiK15}. In this Section we first report the pseudocode for completeness, and then prove that the analysis is tight, meaning that for any $\e \in (0,1)$, there exists an instance of the problem where the solution computed by \swapping is at least a $(4-\e)$ factor away from the optimal one (\Cref{ex:swapping}). Finally, in \Cref{fig:synthetic} we report the empirical performances of \swapping, \sieve, and our algorithms on such hard instance. 
    \begin{algorithm}[t!]
	\caption{\swapping}
	\begin{algorithmic}[1]
		\STATE \textbf{Environment:} stream $\pi$ of elements, function $f$, cardinality $k$
		\STATE $S \gets \emptyset$
		\FOR{each new arriving element $e$ from $\pi$}
		    \STATE $w(e) \gets f(e\mid S)$
		    \IF{$|S| < k$}
		        \STATE $S \gets S + e$
		    \ELSE 
                \STATE $s_e \gets \argmin\{w(y) \mid y \in S\}$
                \IF{$2 \cdot w(s_e) \le w(e)$}
                    \STATE $S \gets S - s_e + e$
                \ENDIF
		    \ENDIF
		\ENDFOR
		\STATE \textbf{Return} $S$
	\end{algorithmic}
    \end{algorithm}
    
    \begin{example}
    \label{ex:swapping}
    Fix any $\e \in (0,1)$, and consider the following weighted covering instance, parameterized by an integer $i$ that we set later and the cardinality constraint $k = 2^i$. 
    The set of items is $E = \{e_\ell^j \mid j \in \{0,\ldots,i\}, \ell\in \{1,\ldots,k\}\}$.
    Consider the partition of $E$ into $i+1$ bundles of items $E^0, E^1, \dots, E^i$, where each bundle has $k$ items $E^j = \{e_1^j, \dots, e_k^j\}$. Let $\delta > 0$ be a small positive constant, which we will set later. The weight of the generic element $e_{\ell}^j$ in $E$ is $w_{\ell}^j = 2^{j}$ if $j \neq i$ and $w_{\ell}^i = 2^{i} - \delta$ otherwise. 
    Now that we have the auxiliary set $E$, we can define the stream $\pi$ of subsets of $E$ as follows. For $0 \leq j < i$, let $\pi^j$ be the subsequence $\{e_1^j\}, \dots, \{e_k^j\},E^j$. Let $\pi^i$ be the subsequence $\{e_1^i\}, \dots, \{e_k^i\}$ (without bundle $E^i$ at the end). Then $\pi$ is given by the concatenation of $\pi^0, \pi^1, \dots, \pi^i$.

    Now, the behaviour of \swapping on $\pi$ is clear: 
    it maintains in the solution the last $k$ singletons that arrived up to bundle $E^{i-1}$ and ignores the elements in $E^i$ (because of the small $\delta$). In particular, at the end of the stream outputs the solution $S = \{\{e_1^{i-1}\},\dots, \{e_k^{i-1}\}\}$, for a value of 
    \[  
        f(S)=\sum_{\ell=1}^k w_\ell^{i-1} = k \cdot 2^{i-1}.
    \]
    \end{example}

    Consider now the optimal solution $S^{\star}$ given by the $i$ bundles $E^0, \dots, E^{i-1}$ and $k-i$ singletons from the last bundle, e.g., $\{e^{i}_1\}, \dots, \{e_{k-i}^i\}$, for a value of
    \[
        f(S^{\star}) = \sum_{j=0}^{i} \sum_{\ell=1}^k w_\ell^{j} - \sum_{\ell=k-i+1}^{k}w_\ell^{i} \ge k \cdot (2^{i+1}-1) -k \delta - i\cdot2^{i}.
    \]
    Note, $S^{\star}$ is indeed the optimal solution because of our choice of $k=2^i$: the total weight of the elements in $E_0$ is $k$, while a singleton from $E_i$ has weight $2^i - \delta$. We can now focus on the approximation factor, we have: 
    \[
        \frac{f(S^\star)}{f(S)} \ge 4 - \frac{2}{2^i}(\delta+1+i).
    \]
    Now, the negative terms go to zero when $i$ goes to infinity (and $\delta$ is small enough), thus for any fixed precision $\e$ it is possible to set $i$ and $\delta$ so that $\nicefrac{f(S^\star)}{f(S)} \ge 4 - \e.$
    
    \begin{figure}
    \begin{subfigure}{.4\textwidth}
      \centering
      \includegraphics[width=\linewidth]{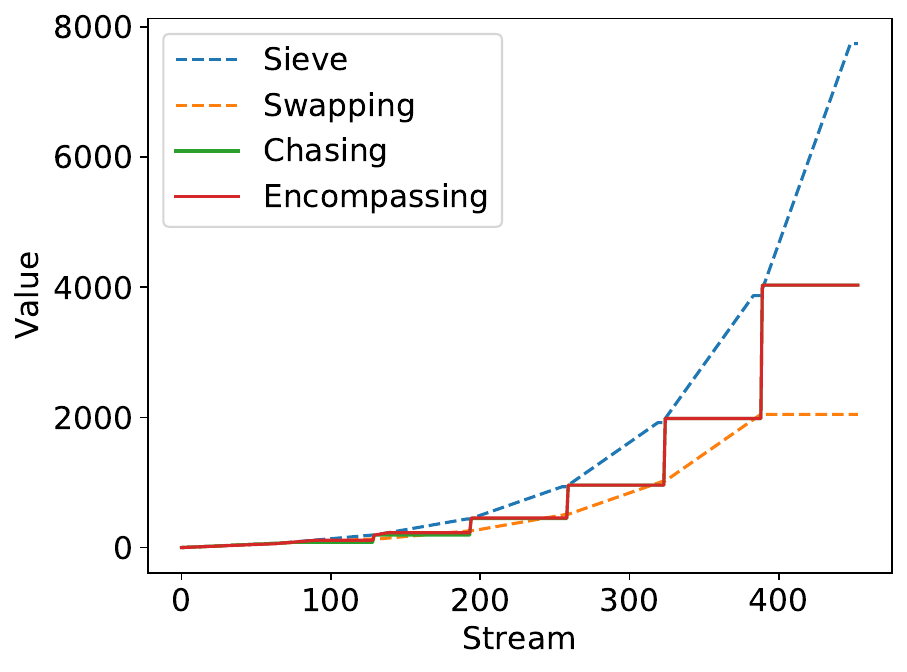}
      \caption{Weighted Covering - Objective Value}
    \end{subfigure}%
    \hfill
    \begin{subfigure}{.4\textwidth}
      \centering
      \includegraphics[width=\linewidth]{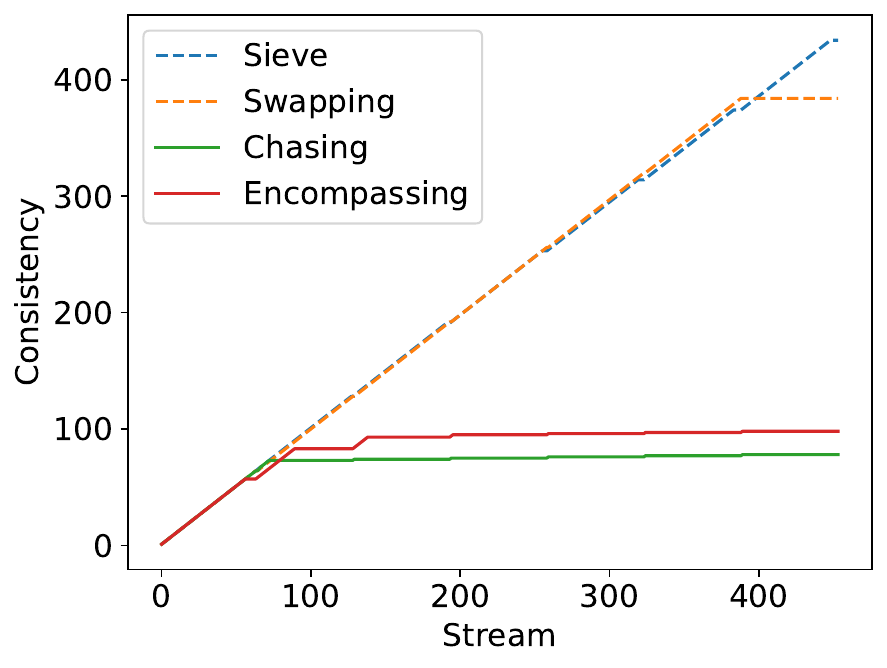}
      \caption{Weighted Covering - Cumulative Consistency}
    \end{subfigure}
    \caption{Experimental results on the weighted covering instance of \Cref{ex:swapping}, for $\delta = 0.01$, $i = 7$, and $\e = 0.1$.}
    \label{fig:synthetic} 
    \end{figure}
    
\section{Further Experimental Results}
\label{app:other_experiments}

    In our experiments, we set $\e = 0.1$ in \sieve and \eswapping, while the cardinality constraint $k$ is consistently set to $20.$ The order of the stream of elements is the one intrinsic in the dataset we consider. In \Cref{fig:other_results}, we report three extra experimental case studies. Besides studying the k-medoid and logdet objective on a random sample ($10332$ points) from the Uber pickups dataset \citep{Uber14} (see the last two columns of \Cref{fig:other_results} for the results), we present results for Personalized Movie Recommendation (first column of \Cref{fig:other_results}).

\paragraph{Personalized Movie Recommendation.}
        Movie recommendation systems are one of the common experiments in the context of submodular maximization  \citep[e.g.,][]{AmanatidisFLLMR21,DuettingFLNZ22,HalabiFNTT23}. In this experiment, we have a large collection $M$ of movies that arrive online and we want to design a recommendation system that proposes movies to users. For example, the summary may be a carousel of `recommended movies’ presented to a downstream user, and we would like the selection to be fairly stable.
        We use the MovieLens 1M database \citep{movielens16}, that contains 1000209 ratings for 3900 movies by 6040 users. Based on the ratings, it is possible to associate to each movie $m$, respectively user $u$, a feature vector $v_m$, respectively  $v_u$.
        More specifically, we complete the users-movies rating matrix and then extract the feature vectors using a singular value decomposition and retaining the first $30$ singular values \citep{TroyanskayaCSBHTBA01}. Following the literature \citep[e.g.,][]{MitrovicBNTC17}, we measure  the quality of a set of movies $S$ with respect to user $u$ (identified by her feature vector $v_u$), using the following monotone submodular objective function:
        \[
            f_u(S) = (1-\alpha) \sum_{s \in S} \scalar{v_u}{v_s}_+ + \alpha \cdot \sum_{m \in M} \max_{s \in S} \scalar{v_m}{v_s},
        \]
        where $\scalar{a}{b}_+$ denotes the positive part of the scalar product.
        The first term is linear and sum the {\em predicted scores} of user $u$ (that is chosen as a random point in $[0,1]^{30}$ in our experiments) for the movies in $S$, while the second term has a facility-location structure and is a proxy for how well $S$ {\em covers} all the movies. Finally, parameter $\alpha$ balances the trade off between the two terms; in our experiments it is set to $0.95$.

    \begin{figure}
    \begin{subfigure}{.32\textwidth}
      \centering
      \includegraphics[width=\linewidth]{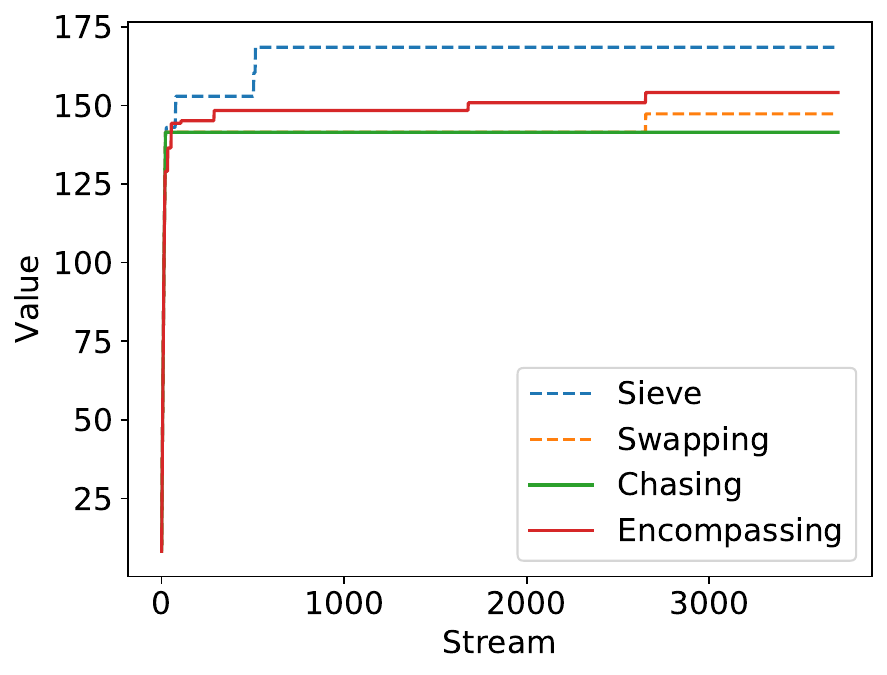}
      \caption{MovieLens Dataset}
      \label{fig:movie-objective}
    \end{subfigure}
     \begin{subfigure}{.32\textwidth}
      \centering
      \includegraphics[width=\linewidth]{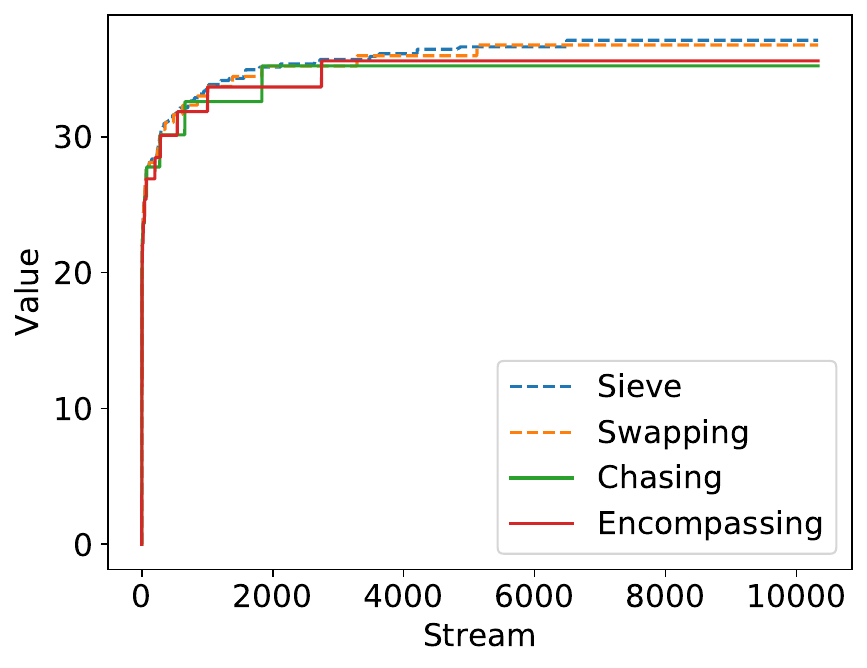}
      \caption{k-medoid Clustering on Uber Dataset}
      \label{fig:uber-kmedoid-objective}
    \end{subfigure}
    \begin{subfigure}{.32\textwidth}
      \centering
      \includegraphics[width=\linewidth]{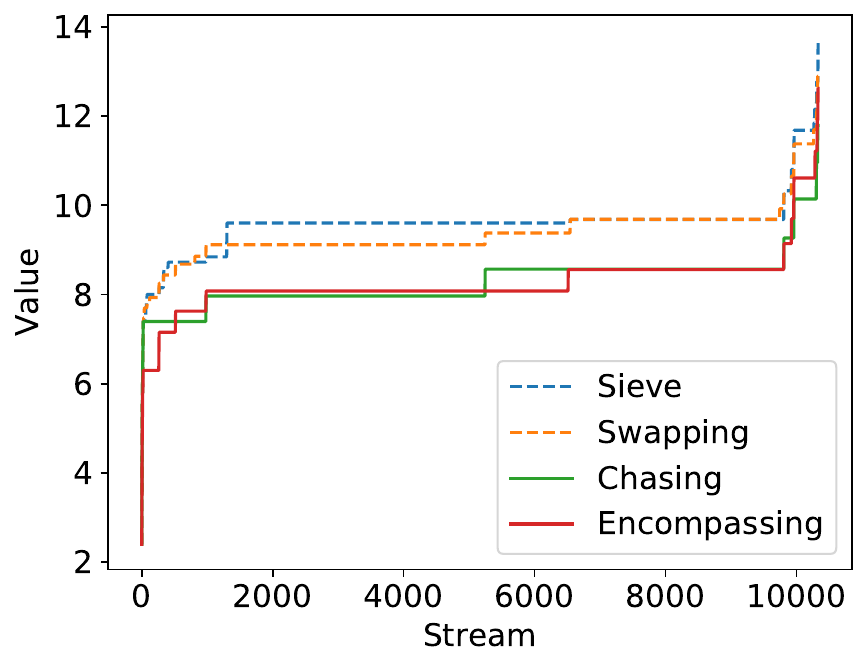}
      \caption{LogDet Maximization on Uber Dataset}
      \label{fig:uber-logdet-objective}
    \end{subfigure}
    \begin{subfigure}{.32\textwidth}
      \centering
      \includegraphics[width=\linewidth]{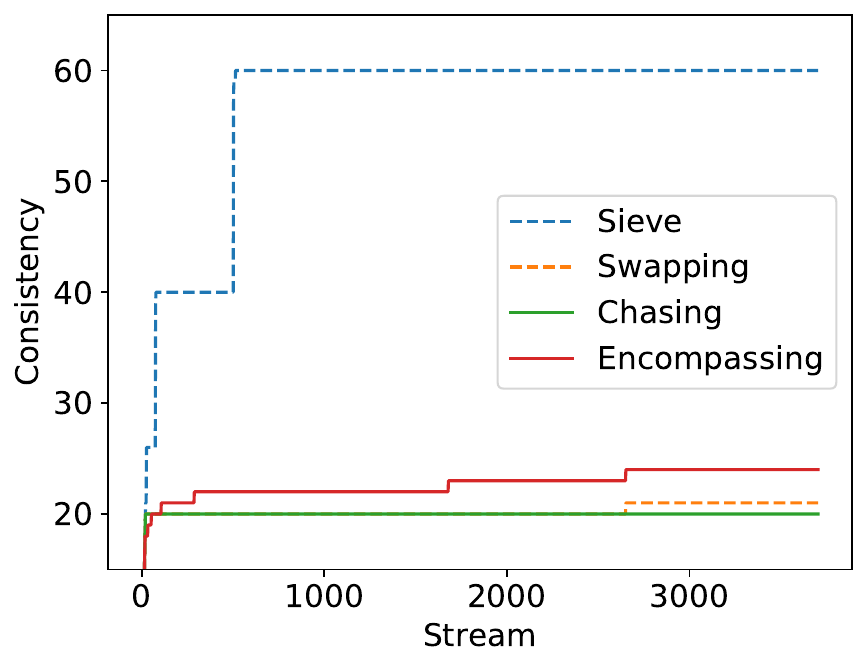}
      \caption{MovieLens Dataset}
      \label{fig:movie-consistency}
    \end{subfigure}
    \begin{subfigure}{.32\textwidth}
      \centering
      \includegraphics[width=\linewidth]{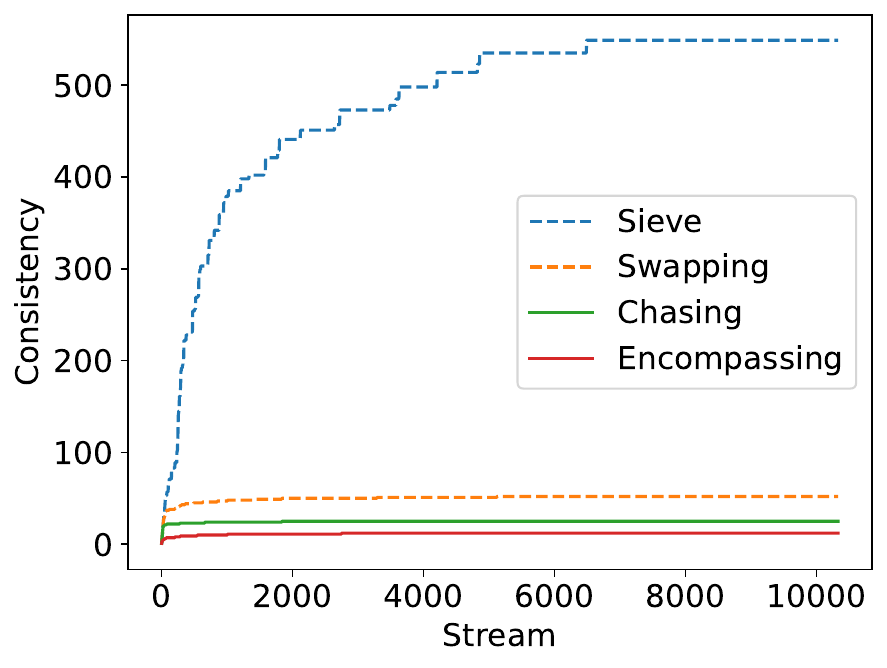}
      \caption{k-medoid Clustering on Uber Dataset}
      \label{fig:uber-kmedoid-consistency}
    \end{subfigure}
    \begin{subfigure}{.32\textwidth}
      \centering
      \includegraphics[width=\linewidth]{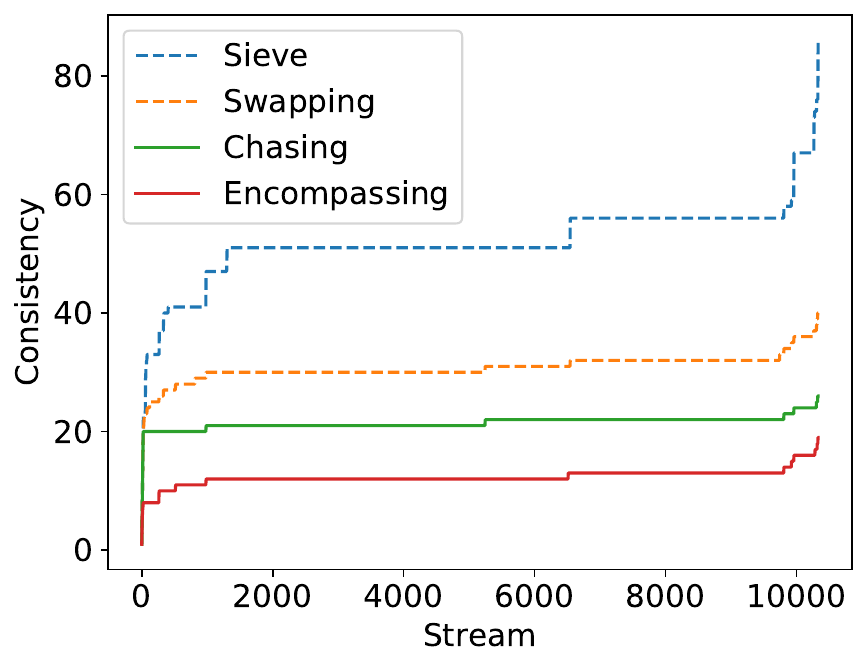}
      \caption{LogDet Maximization on Uber Dataset}
      \label{fig:uber-logdet-consistency}
    \end{subfigure}
\caption{Further Experimental Results. 
The first row reports the objective values, the second one the cumulative consistency.}
\label{fig:other_results}
\end{figure}
\end{document}